\documentclass[sigconf, nonacm]{acmart}

\usepackage{subfigure}
\usepackage[ruled,linesnumbered, vlined,titlenotnumbered]{algorithm2e}
\usepackage{caption}
\usepackage{amsthm}
\usepackage{amsmath} 
\usepackage{mathtools} 
\usepackage{enumitem}
\usepackage{multirow} 
\usepackage{balance}  

\usepackage{xcolor}
\usepackage{tabularx}




\newtheoremstyle{common}
 {} 
 {} 
 {\rm} 
 {0.7em} 
 {\bfseries} 
 {.} 
 {.5em} 
 {} 
\theoremstyle{common}

\newtheorem{exam}{E{\scriptsize XAMPLE}}
\newtheorem{defn}{D{\scriptsize EFINITION}}
\newtheorem{lmma}{L{\scriptsize EMMA}}

\newtheorem{coro}{C{\scriptsize OROLLARY}}
\newtheorem{prop}{P{\scriptsize ROPOSITION}}
\newtheorem{cond}{C{\scriptsize ONDITION}}
\newtheorem{claim}{C{\scriptsize LAIM}}

\newcommand{\COMMENTOUT}[1]{}

\begin{document}
\title{Nass: A New Approach to Graph Similarity Search}

\author{Jongik Kim}
\affiliation{%
  \institution{Jeonbuk National University}
  \city{Jeonju}
  \country{Republic of Korea}
}
\email{jongik@jbnu.ac.kr}

\begin{abstract}
  In this paper, we study the problem of graph similarity search with graph
  edit distance (GED) constraints. Due to the NP-hardness of GED computation,
  existing solutions to this problem adopt the filtering-and-verification
  framework with a main focus on the filtering phase to generate a small number
  of candidate graphs. However, they have a limitation that the number of
  candidates grows extremely rapidly as a GED threshold increases. To address
  the limitation, we propose a new approach that utilizes GED computation
  results in generating candidate graphs. The main idea is that whenever we
  identify a result graph of the query, we immediately regenerate candidate
  graphs using a subset of pre-computed graphs similar to the identified result
  graph.  To speed up GED computation, we also develop a novel GED computation
  algorithm.  The proposed algorithm reduces the search space for GED
  computation by utilizing a series of filtering techniques, which have been
  used to generate candidates in existing solutions. Experimental results on
  real datasets demonstrate the proposed approach significantly outperforms the
  state-of-the art techniques.
\end{abstract}

\maketitle


\makeatletter
\def\thickhline{%
  \noalign{\ifnum0=`}\fi\hrule \@height \thickarrayrulewidth \futurelet
   \reserved@a\@xthickhline}
\def\@xthickhline{\ifx\reserved@a\thickhline
               \vskip\doublerulesep
               \vskip-\thickarrayrulewidth
             \fi
      \ifnum0=`{\fi}}
\makeatother

\newlength{\thickarrayrulewidth}
\setlength{\thickarrayrulewidth}{2\arrayrulewidth}

\section{Introduction}
\label{sec:intro}

Complex and interconnected data, which are represented by graph models, are
used in a wide range of applications such as business process management,
pattern recognition, drug design, program analysis, and
cheminformatics~\cite{PARS, BRANCH, GSIM2, SEGOS, KAT, MCS1, MLINDEX}.  Finding
graphs similar to a given query is a fundamental operation required in these
applications, because inconsistency, natural noises, and different
representations are unavoidable in real-world graph data.

To quantify the similarity between graphs, various similarity measures have
been developed, such as maximum common subgraphs~\cite{MCS1,MCS2}, missing
edges and features~\cite{MISSEDGE2,MISSEDGE3}, and graph
alignment~\cite{GRAPHALIGN}. Among them, the most commonly used measure in
similarity search studies is graph edit distance (GED)~\cite{GEDORG,
  GED-SURVEY, GED1, GED-1, GED-2, GED-3, GED-4}. The GED between two graphs is
the minimum number of edit operations to transform one graph into the other,
where an edit operation is insertion, deletion, or relabeling of a single
vertex or edge.
GED can capture the structural difference between graphs, and it can be applied
to many types of graphs~\cite{STAR, PARS}.

The graph similarity search problem studied in this paper is to retrieve all
graphs in a collection of graphs whose GED to a query is within a given
threshold. The NP-hardness of GED computation~\cite{STAR} makes this problem
challenging, and there has been a rich literature in developing efficient graph
similarity search techniques. Existing solutions to the problem have been
developed under a filtering-and-verification framework, where candidate graphs
are generated using various filtering techniques, and each of the candidates is
verified by GED computation. In the filtering-and-verification paradigm, it is
crucial to generate a set of candidates as small as possible because the
performance of verification relies on the number of candidates. Therefore, the
majority of efforts have been aimed at developing candidate generation
techniques.

To generate candidates, existing techniques establish a filtering condition
between dissimilar graphs by utilizing features of graphs, i.e., substructures
of graphs. To eliminate the overhead of extracting features from data graphs
and to quickly generate candidates, most techniques build an offline index on
features of data graphs. For example, {\sf c-star}~\cite{STAR} and {\sf
  $k$-AT}~\cite{KAT} build an index on tree-structured features extracted from
data graphs. {\sf Branch}~\cite{BRANCH, MIXED} and {\sf
  GSimSearch}~\cite{GSIM1, GSIM2} index branch and path-based $q$-gram
features, respectively. {\sf Pars}~\cite{PARS, PARS2} and {\sf
  MLIndex}~\cite{MLINDEX} use partitions of graphs as features to be
indexed. In contrast to others, a recent technique {\sf Inves}~\cite{INVES}
introduces an online-partitioning algorithm to make use of a partition-based
filter in the verification phase. Its partition-based filter plays a role of
screening each candidate to reduce the number of candidates passed to GED
computation. Hence, this candidate refinement step of {\sf Inves} can be also
considered as a part of candidate generation.

\begin{table}[htbp]
\centering
\caption{ Number of candidates vs. number of results of existing algorithms
on the AIDS dataset (avg. of 100 queries)}
  \begin{tabular}{c|c|c|c|c|c|c} \thickhline
    \multirow{2}{*}{$\tau$} & \multicolumn{5}{c|}{\small Candidates} & \multirow{2}{*}{\small Results} \\\cline{2-6}
                            & {\sf \small LF} & {\sf \small GSimSearch} & {\sf \small MLIndex} & {\sf \small Pars} & {\sf \small Inves} & \\ \hline\hline
       1 & 13   & 7.6  & 2.7  & 1.3  & 0.5  & 0.22 \\
       2 & 78   & 63   & 37   & 14   & 4.6  & 0.63 \\
       3 & 285  & 273  & 226  & 139  & 59   & 1.26 \\  
       4 & 738  & 736  & 691  & 562  & 335  & 2.70 \\
       5 & 1488 & 1487 & 1459 & 1346 & 1035 & 4.91 \\ 
       6 & 2514 & 2514 & 2493 & 2462 & 2129 & 9.09 \\  
       7 & 3780 & 3780 & 3757 & 3751 & 3501 & 18.45 \\ \thickhline
  \end{tabular}
\label{tbl:expr-intro-cand}  
\end{table}

An inherent limitation of existing feature-based filtering techniques is that
the filtering effect sharply decreases as a GED threshold increases.
Table~\ref{tbl:expr-intro-cand} shows the average number of candidates and that
of results of existing filtering algorithms on a real dataset AIDS for 100
queries (see Section~\ref{sec:expr} for details of the dataset and queries). In
the table, $\tau$ denotes a GED threshold, and {\sf LF} is a basic filter that
utilizes the difference between label multisets of graphs.
{\sf LF} gives an upper bound on the number of candidates. From the results in
the table, we observe that the number of candidates grows significantly
faster than that of results until it almost reaches an obvious upper bound,
i.e., the number of candidates generated by {\sf LF}.

Motivated by the observation, we propose in Section~\ref{sec:framework} a
fundamentally different filtering approach that makes use of GED computation
results in generating candidate graphs. The main idea is that if we identify a
graph $r$ as a result of the query, we immediately (re)generate candidate
graphs using a subset of pre-computed graphs similar to $r$. Unlike existing
techniques that strictly separate the filtering phase from the verification
phase, our approach makes the filtering phase interact with the verification
phase. That is, candidate graphs are continuously regenerated whenever a result
of the query is identified in the verification phase.

To compute the GED between a pair of graphs, existing
solutions~\cite{INVES,GED,PARS,GSIM1,GSIM2,PARS2,MIXED,STAR,MLINDEX} perform a
best-first search in the space of all possible vertex mappings between the
pair, which are organized into a prefix tree. An intermediate node of the
prefix tree represents a partial vertex mapping. At each tree node, a GED lower
bound of the corresponding partial vertex mapping is computed, and the subtree
rooted by the node is pruned if the lower bound is greater than the
threshold. In computing a GED lower bound of a partial mapping, all existing
solutions exploit label set differences of unmapped vertices and
edges. However, a label set-based lower bound tends to be very loose since it
cannot reflect structural differences. As a consequence, existing solutions
suffer from a huge search space.

To reduce the search space, we formulate the GED computation problem as a
repetition of filtering dissimilar subgraphs. Although the existing filtering
techniques exhibit poor performance as a threshold increases, we show in
Section~\ref{sec:ged} that they can be effectively used to filter out
dissimilar subgraphs during GED computation. Based on the observation, we
develop a novel and efficient GED computation algorithm that integrates
alternative filtering techniques, which have been used for generating
candidates.

In summary, the following are the contributions of this paper.

\begin{itemize}[leftmargin=*]
\item We propose a new approach to graph similarity search that exploits
  GED computation results in generating candidates and utilizes a series of
  filtering techniques in GED computation.
\item
  We show that candidate graphs can be dynamically regenerated while verifying
  candidate graphs, and propose a novel graph similarity search algorithm
  based on the candidate regeneration method.
\item We develop an efficient GED computation algorithm that substantially
  reduces the search space by utilizing a series filtering techniques. We
  judiciously select filters for GED computation, and carefully apply the
  selected filters to efficiently prune dissimilar subgraphs.
\item We integrate the proposed techniques into a new search framework named
  {\sf Nass} (\underline{\bf n}ew \underline{\bf a}ppro\-ach to graph
  \underline{\bf s}imilarity \underline{\bf s}earch), and implement the
  framework. We conduct extensive experiments on real datasets and show that
  {\sf Nass} outperforms existing techniques by orders of magnitude.
\end{itemize}

The rest of the paper is organized as follows: Section~\ref{sec:preli} presents
preliminaries and related work.
Section~\ref{sec:framework} proposes a new method to generate candidate graphs
via GED verification, and presents our search framework.
Section~\ref{sec:ged} develops an
efficient GED computation algorithm that takes advantage of a series filtering
techniques.
Section~\ref{sec:impl} presents implementation issues, and
Section~\ref{sec:expr} reports experimental
results. Section~\ref{sec:conclusion} concludes the paper.

\section{Preliminaries and Related Work}
\label{sec:preli}

\subsection{Problem Formulation}
In this paper, we focus on undirected and labeled simple graphs, though the
proposed technique can be easily extended other types of graphs.  An undirected
and labeled simple graph $g$ is a triple $(V(g),\ E(g),\ l)$, where $V(g)$ is
a set of vertices, $E(g) \subseteq V(g) \times V(g)$ is a set of edges, and
$l:V(g) \cup (V(g) \times V(g)) \rightarrow \Sigma$ is a labeling function that
maps vertices and edges to labels, where $\Sigma$ is the label set of vertices
and edges. $l(v)$ and $l(u, v)$ respectively denote the label of a vertex
$v$ and the label of an edge $(u, v)$.
If there is no edge between $u$ and $v$, $l(u,v)$ returns a unique value
$\lambda$ distinguished from all other labels. We also define a blank vertex
$\varepsilon$ such that $l(\varepsilon) = l(\varepsilon, v) = l(u,
\varepsilon) = \lambda$.
There are no self-loops nor more than one edge between two vertices.  For
simplicity, in the rest of the paper, we use graph to denote undirected and
labeled simple graph.

To measure the similarity between a pair of graphs, we use graph edit distance defined
in Definition~\ref{def:ged}.

\begin{defn}[{\bf Graph edit distance}]
  \label{def:ged}
The graph edit distance (GED) between two graphs $g_1$ and $g_2$, which is
denoted by $\mathsf{ged}(g_1, g_2)$, is the minimum number of edit operations
that transform $g_1$ into $g_2$, where an edit operation is one of the following:
\begin{enumerate}[leftmargin=*]
\label{def:ged}
\item insertion of an isolated labeled vertex
\item deletion of an isolated labeled vertex 
\item substitution of the label (i.e., relabeling) of a vertex
\item insertion of a labeled edge
\item deletion of a labeled edge
\item substitution of the label (i.e., relabeling) of an edge.
\end{enumerate}
\end{defn}

\begin{figure}[htbp]
  \centering \includegraphics[height=3.3cm]{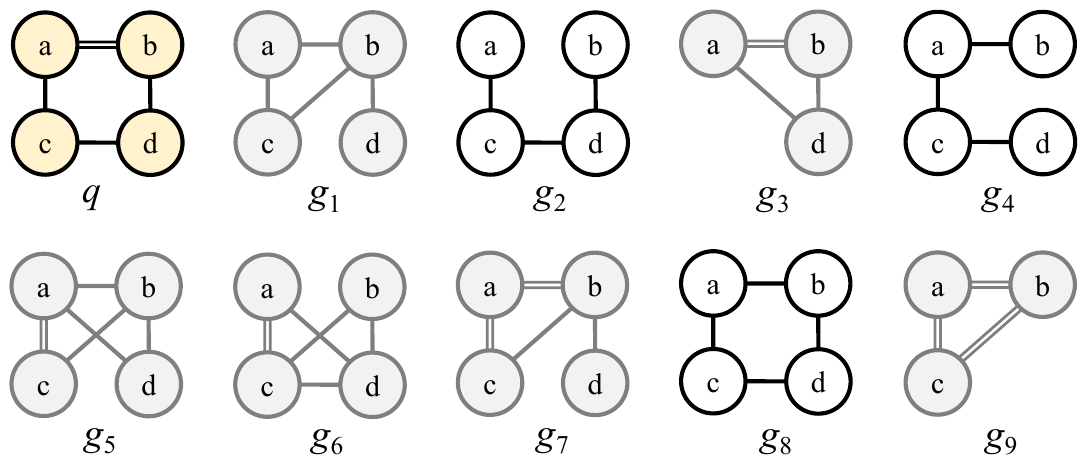}
  \caption{Example query and data graphs}
\label{fig:graph}
\end{figure}

\begin{exam}
\label{ex:ged}
  Consider two graphs $q$ and $g_1$ in Figure~\ref{fig:graph}, where the solid
  and hollow lines represent edge labels. To transform $q$ into $g_1$, we can
  perform the following three edit operations on $q$: substitution of the label
  of the edge between $a$ and $b$ (from a hollow edge to a solid edge),
  insertion of a solid edge between $b$ and $c$, and deletion of the edge
  between $c$ and $d$. Therefore, ${\sf ged}(q, g_1) = 3$.
\end{exam}

\begin{lmma}
\label{lm:metric}  
  GED defined in Definition~\ref{def:ged} is metric~\cite{GED-SURVEY}, and the following
  properties hold on GED.
  \begin{itemize}[leftmargin=*]
  \item $\forall g_1, g_2\ \ \mathsf{ged}(g_1, g_2) \geq 0$.
  \item $\forall g_1, g_2\ (g_1 = g_2\ \iff\ \mathsf{ged}(g_1, g_2) = 0)$.
  \item $\forall g_1, g_2\ \ \mathsf{ged}(g_1, g_2) = \mathsf{ged}(g_2, g_1)$.
  \item $\forall g_1, g_2, g_3\ \ \mathsf{ged}(g_1, g_2) \leq \mathsf{ged}(g_1, g_3) + \mathsf{ged}(g_2, g_3)$.
  \end{itemize}
\end{lmma}

We formulate the problem of graph similarity search in a graph database, as
follows.
  
\begin{defn}[{\bf Graph similarity search problem}]
For a graph database $\mathcal{D}$ and a query graph $q$ with a GED threshold
$\tau$, the problem of graph similarity search is to find a result set, denoted by
$\mathcal{R}(q, \tau)$, containing all data graphs $g \in \mathcal{D}$ such that
${\sf ged}(q, g) \leq \tau$.
%
\end{defn}

\colorlet{dred}{black!20!red}
\begin{exam}
\label{ex:simsearch}
  For the graphs in Figure~\ref{fig:graph}, consider a graph database
  $\mathcal{D} = \{g_1, g_2, \ldots, g_9\}$, and a query graph $q$ with a GED
  threshold $\tau = 2$. The following table shows the GED between $q$ and $g_i$.
  \begin{center}
  \begin{tabular}{p{1.5cm}|c|c|c|c|c|c|c|c|c c} \thickhline
        & $g_1$ & $g_2$ & $g_3$ & $g_4$ & $g_5$ & $g_6$ & $g_7$ & $g_8$ & $g_9$ & \\ \hline
    \ {\sf ged}$(q, g_i)$ &  $3$  & \textcolor{dred}{\bf 1}&  $4$ &\textcolor{dred}{\bf 2}& $4$  &  $4$  &  $3$ &\textcolor{dred}{\bf 1} & $5$ & \\ \thickhline
  \end{tabular}
  \end{center}
\vspace*{0cm}
Hence, the graph similarity search returns $\mathcal{R}(q, 2) = \{g_2, g_4, g_8\}$.
\end{exam}


\subsection{GED Computation}
\label{sec:gedcomputation}

\newcommand*{\smapsto}{\mathbin{\scalebox{0.7}{\ensuremath{\mapsto}}}}

In this subsection, we provide a general description of existing GED
computation methods. A {\it vertex mapping} between two graphs $g_1$ and $g_2$
is a bijection of $V(g_1)$ onto $V(g_2)$\footnote{If $|V(g_1)| \neq |V(g_2)|$,
  we add $||V(g_1)| - |V(g_2)||$ copies of a blank vertex $\varepsilon$ into
  $V(g_1)$ or $V(g_2)$ to make $|V(g_1)| = |V(g_2)|$ based on the observation
  of \cite{MIXED}.}. A vertex mapping is represented by an ordered set of
mapped vertex pairs, where the order is imposed by a pre-defined ordering of
$V(g_2)$. Given a vertex mapping $m$, $g_1$ can be transformed into $g_2$ by
abiding by $m$ as follows. For each mapped vertex pair $u \smapsto v \in m$, we
make $u$ and $v$ identical in terms of the labels of the vertices and the
labels and connectivity of their adjacent edges. The number of edit operations
required in this transformation is called the {\it edit cost} of $m$, which is
formally stated in Definition~\ref{def:edcost}.

\begin{defn}[{\bf Edit cost}]
\label{def:edcost}  
  Let $u \smapsto v$ be the last mapped vertex pair in $m$, and $m' =
  m - \{u \smapsto v\}$. The edit cost of $m$ is defined as:
  \[{\sf ec}(m) = {\sf ec}(m') + {\sf d}[l(u), l(v)] +
  \sum_{u'\smapsto v' \in m'} {\sf d}[l(u, u'), l(v, v')],\]
  where ${\sf ec}(\emptyset) = 0$ and ${\sf d}[x, y]= \begin{cases} 0, & \text{if } x
    = y\\ 1, & \text{otherwise.} \end{cases}$
\end{defn}  

GED computation is a process to find a vertex mapping having a minimum edit
cost among all possible vertex mappings between $g_1$ and $g_2$.
To avoid redundant edit cost computation among vertex mappings that shares a
prefix, all possible vertex mappings can be organized into a prefix-sharing
tree, which is called a {\it search tree}.

\begin{figure}[htbp]
   \centering \includegraphics[height=1.7cm]{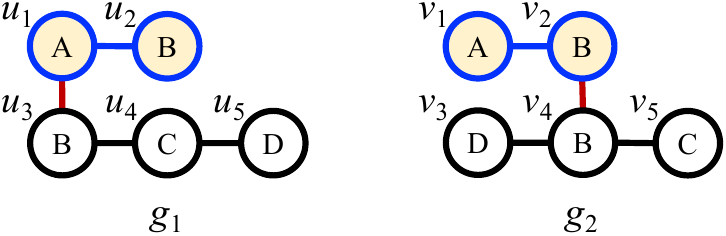}
   \caption{Two graphs $g_1$ and $g_2$}
\label{fig:mapping}
\end{figure}

\begin{figure}[t]
  \centering \includegraphics[height=3.4cm]{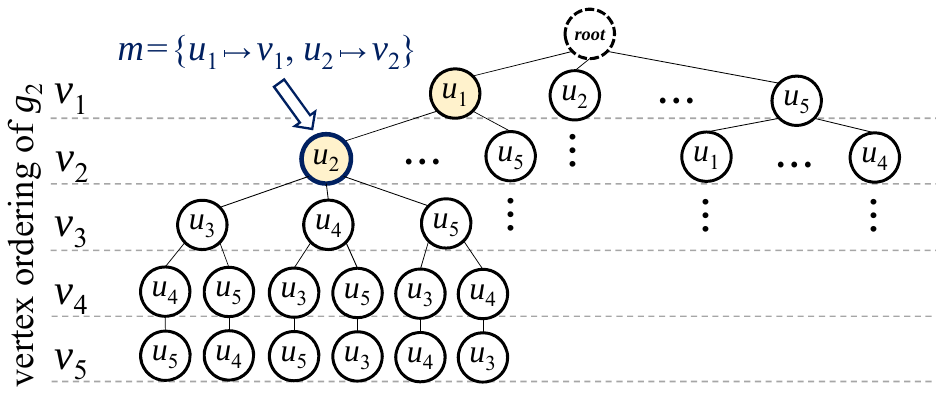}
  \caption{Search tree for graphs in Figure~\ref{fig:mapping}}
\label{fig:prefixtree}
\end{figure}
\vspace*{-0.5cm}

An example search tree for the graphs in Figure~\ref{fig:mapping} is depicted
in Figure~\ref{fig:prefixtree}. In this example, the pre-defined vertex
ordering of $g_2$ is $(v_1, v_2, \ldots, v_5)$.
%
%
Each intermediate node $n$ represents a {\it partial mapping}, which is a
shared prefix of the vertex mappings in the leaves of the subtree rooted by
$n$. Let the $i^{th}$ vertex of $g_2$ be $v$. A tree node containing a vertex
$u$ of $g_1$ at level $i$ represents a mapping $m_p \cup \{u \smapsto v\}$,
where $m_p$ is the mapping of the parent node, and the mapping of the root is
$\emptyset$. In Figure~\ref{fig:prefixtree}, for example, the node indicated by
an arrow corresponds to a partial mapping $m = \{u_1 \smapsto v_1, u_2 \smapsto
v_2\}$.
%
Since a partial mapping uniquely identifies a node in the search tree, we use a
partial mapping interchangeably with the corresponding tree node if clear from
the context.

\begin{defn}[{\bf Lower bound of a partial mapping}]
\label{def:lb}  
The {\it lower bound} of a partial mapping $m$, denoted by ${\sf lb_M}(m)$, is
a lower limit of the edit costs of the vertex mappings in the leaves of the
subtree rooted by $m$.
\end{defn}

Given a GED threshold $\tau$, the subtree rooted by $m$ is pruned if ${\sf lb_M}(m)
> \tau$. To compute ${\sf lb_M}(m)$, we first divide each graph $g$
participating in $m$ into the following three parts:
%
%
\begin{itemize}[leftmargin=*]
\item The {\it mapped subgraph} of $g$ , which is denoted by $g|_m$, is an
  induced subgraph of $g$ defined by the vertices of $g$ participated in
  $m$.
\item The {\it unmapped subgraph} of $g$, which is denoted by $g \backslash
  g|_m$, is an induced subgraph of $g$ defined by the vertices in $V(g)
  \backslash V(g|_m)$.
\item The {\it bridges} are edges connecting $g|_m$ to $g \backslash g|_m$.
\end{itemize}

\noindent
Then, ${\sf lb_M}(m)$ is computed as the sum of
\begin{enumerate}[leftmargin=*]
\item the edit cost required between $g_1|_m$ and $g_2|_m$, which is computed
  as ${\sf ec}(m)$ (Definition~\ref{def:edcost});
\item a lower bound of the GED between $g_1 \backslash g_1|_m$ and $g_2
  \backslash g_2|_m$, which is computed using the {\it label set-based lower
    bound} (Definition~\ref{def:lf});
\item and a lower bound of the number of edit operations required to make the
  bridges of $g_1$ and $g_2$ identical, which is computed using the {\it bridge
    cost} (Definition~\ref{def:bc}).
\end{enumerate}

\begin{defn}[{\bf Label set-based lower bound~\cite{GSIM1}}]
\label{def:lf}  
  The label set-based lower bound between two graphs $r$ and $s$ is defined
  as:
  \[{\sf lb_L}(r, s) = \Gamma(L_V(r), L_V(s)) + \Gamma(L_E(r),
  L_E(s)),\] where $L_V(g)$ and $L_E(g)$ denotes the label multisets of
  vertices and edges of a graph $g$, respectively, and $\Gamma(A, B)$ {\rm =}
  {\sf max}$(|A|, |B|) - |A \cap B|$.
\end{defn}

\begin{defn}[{\bf Bridge cost~\cite{INVES}}]
\label{def:bc}  
  Given a partial mapping $m$, the number of edit operations required in the
  bridges are at least
  \[\mathcal{B}(m) = \sum_{u \rightarrow v \in m} \Gamma(L^{m}_{br}(u), L^{m}_{br}(v)),\]
  where $L^{m}_{br}(w)$ denotes the label multiset of the bridges connected to
  a vertex $w$.
\end{defn}


\begin{exam}
\label{ex:mapping}  
  Consider a partial mapping $m = \{u_1 \smapsto v_1, u_2 \smapsto v_2\}$
  between the two graphs in Figure~\ref{fig:mapping}. The mapped subgraphs,
  unmapped subgraphs, and bridges of the graphs are depicted in blue, black,
  and red lines in the figure, respectively. The edit cost between $g_1|_m$ and
  $g_2|_m$ is 0 since ${\sf ec}(m) = 0$. The label multiset of the vertices in
  $g_1 \backslash g_1|_m$ is $\{B, C, D\}$, which is the same as that in $g_2
  \backslash g_2|_m$. Similarly, the label multiset of the edges in $g_1
  \backslash g_1|_m$ is identical to that in $g_2 \backslash g_2|_m$. Hence,
  ${\sf lb_L}(g_1 \backslash g_1|_m, g_2 \backslash g_2|_m) =
  0$. $\mathcal{B}(m) = 2$ because the bridge label difference between $u_1$
  and $v_1$ is 1 and that between $u_2$ and $v_2$ is also 1. Therefore, ${\sf
    lb_M}(m) = 2$.
\end{exam}

Most of existing solutions traverse the search tree in a best-first fashion
based on the lower bound of each node. Initially, the search tree has a root
node only. They first expand child nodes of the root node of the search
tree. Then, they repeatedly expand child nodes of a node having the lowest
lower bound. It is guaranteed that if they meet a leaf node, the vertex mapping
for the leaf has the minimum edit cost. If all the subtrees of expanded nodes
are pruned, the pair of graphs does not meet the given GED threshold.

\subsection{Related Work}
\label{sec:related}

Existing filtering techniques utilize features of graphs to establish a
necessary condition to meet a GED threshold. Motivated by $q$-gram idea in
string similarity search (e.g., \cite{EDJOIN}), $k$-{\sf AT}~\cite{KAT} defines
a $q$-gram as a tree rooted by a vertex $v$ with all vertices reachable to $v$
in $q$ hops, and {\sf GSimSearch}~\cite{GSIM1, GSIM2} defines a path-based
$q$-gram which is a simple path with length $q$. These techniques are based on
the observation that if the GED between two graphs is within a threshold, then
the graphs should share at least a certain number of $q$-grams. {\sf
  c-star}~\cite{STAR} and {\sf branch}~\cite{BRANCH} structures have been
proposed to derive GED lower bounds through bipartite matching. {\sf c-star} is
1-gram defined by $k$-{\sf AT} and {\sf branch} is a vertex with edges adjacent
to the vertex. All of these filtering techniques have focused on developing
offline index structures. \textsf{SEGOS}~\cite{SEGOS} is a two-level index
structure proposed to efficiently search star structures.

Recent techniques~\cite{PARS, PARS2, MLINDEX, INVES} make use of disjoint
substructures of graphs to capture structural differences between graphs. Based
on the observation in string similarity search (e.g., \cite{PASSJOIN}) and DNA
read mapping techniques (e.g., \cite{HOBBES3}), they decompose each data graph
into partitions, and filter out data graphs dissimilar to the query using the
pigeonhole principle.  {\sf Pars}~\cite{PARS, PARS2} and {\sf
  MLIndex}~\cite{MLINDEX} build offline inverted indices on partitioned
subgraphs. {\sf Mixed}~\cite{MIXED} utilizes small and large disjoint
substructures along with branch structures. {\sf Inves}~\cite{INVES} develops
an online partitioning algorithm that can be used without an index.

The most widely used algorithm for GED computation is {\sf A*-GED}~\cite{GED}.
Recently, {\sf BLP-GED}~\cite{BLPGED}, {\sf DF-GED}~\cite{DFGED}, and {\sf
  CSI\_GED}~\cite{CSIGED,CSIGED2} have been proposed to improve the performance
of GED computation. {\sf BLP-GED} formulates the problem as a binary linear
program, and it is faster and more memory-efficient than {\sf A*-GED}. {\sf
  DF-GED} traverses the search space in a depth first fashion. It has been
found to much more memory-efficient than {\sf A*-GED}. In contrast, {\sf
  CSI\_GED} proposed an edge-based depth-first search. It also has been found
to be both much faster and more memory-efficient than {\sf A*-GED}.

The verification phase of graph similarity search techniques has been developed
based on the {\sf A*-GED} algorithm. {\sf GSimSearch}~\cite{GSIM1} has
suggested that the lower bound computation of {\sf A*-GED} be improved by
utilizing the label set differences. This approach is much faster than the
bipartite heuristic used in {\sf A*-GED}. {\sf Inves}~\cite{INVES} has
introduced a bridge-based lower bound estimation technique, which substantially
reduces the search space.
%
%
{\sf Inves} and {\sf GSimSearch} also exploited effective vertex
orderings for improving the performance of GED computation.

\section{Similarity Search Framework}
\label{sec:framework}

In this section, we present a new approach to graph similarity search. We first
show that candidates of a query can be generated through GED verification, then
develop a novel graph similarity search algorithm. We analyze the proposed
approach to show it can substantially reduce the number of candidates.

\subsection{Candidate Generation Method}
\label{sec:candgen}

In this subsection, we show that candidates of a query can be generated from
the similarity search results of a data graph. After we define a candidate set
in Definition~\ref{def:candset}, we formally state our observation in
Lemma~\ref{lm:candgen}.

\begin{defn}[{\bf Candidate set}]
\label{def:candset}  
  For a graph database $\mathcal{D}$ and a query graph $q$ with a GED threshold
  $\tau$, a candidate set $\mathcal{C}$ of the query is any subset of
  $\mathcal{D}$ that satisfies $\mathcal{R}(q, \tau) \subseteq \mathcal{C}$.
\end{defn}

\begin{lmma}
\label{lm:candgen}  
  Consider a graph database $\mathcal{D}$ and a query graph $q$ with a GED
  threshold $\tau$. Given a data graph $g \in \mathcal{D}$ such that
  $\mathsf{ged}(q, g) = \delta \leq \tau$, the following inclusion relationships hold.
  
  \begin{enumerate}[leftmargin = 3cm]
    \item $\mathcal{R}(q, \tau) \subseteq \mathcal{R}(g, \tau + \delta)$
    \item $\mathcal{R}(g, \tau - \delta) \subseteq \mathcal{R}(q, \tau)$
  \end{enumerate}  
\end{lmma}
\begin{proof}
  For every graph $r\in \mathcal{R}(q, \tau)$, the triangle inequality
  $\mathsf{ged}(g, r) \leq \mathsf{ged}(q, r) + \mathsf{ged}(q, g)$ holds by
  Lemma~\ref{lm:metric}.  Since $\mathsf{ged}(q, r) \leq \tau$ and
  $\mathsf{ged}(q, g) = \delta$, $\mathsf{ged}(g, r) \leq \tau + \delta$, which
  implies $r \in \mathcal{R}(g, \tau + \delta)$. It can be similarly proved
  that every graph $r' \in \mathcal{R}(g, \tau - \delta)$ is included in
  $\mathcal{R}(q, \tau)$.
\end{proof}

Lemma~\ref{lm:candgen} suggests that as soon as we identify a result of the
query, we can generate a candidate set. It also suggests that some candidate
graphs can be directly determined as results of a query without any
verification. For every pair of graphs $g_1$ and $g_2$ in $\mathcal{D}$, {\sf
  ged}$(g_1, g_2)$ can be pre-computed and materialized to immediately obtain
$\mathcal{R}(g, \tau + \delta)$ and $\mathcal{R}(g, \tau - \delta)$ stated in
Lemma~\ref{lm:candgen}. We use the pre-computed results as an index for
generating candidates of a query.  For example, Figure~\ref{fig:geds} shows
pre-computed GEDs for the graphs in Figure~\ref{fig:graph}. We will discuss how
to implement such an index in Section~\ref{sec:index}.
Example~\ref{ex:candgen} demonstrates our candidate generation.

\begin{figure}[htbp]
  \centering \includegraphics[height=4cm]{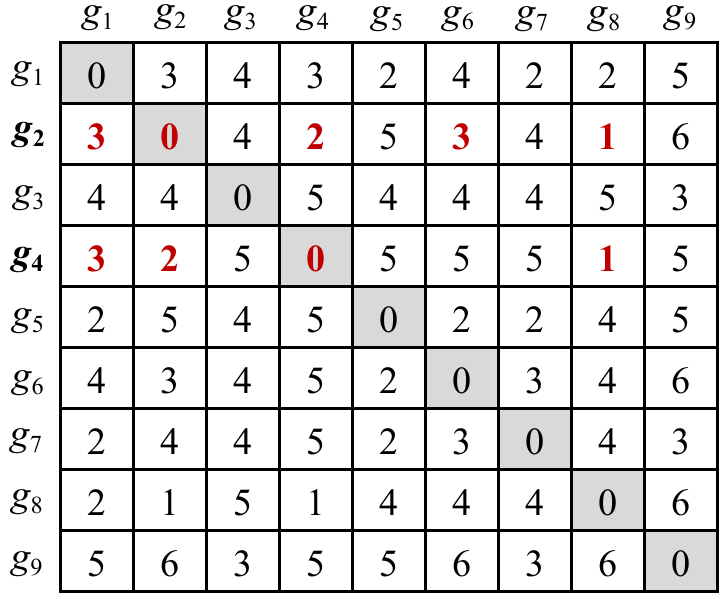}
  \caption{GEDs between graphs in Figure~\ref{fig:graph}}
\label{fig:geds}
\end{figure}

\begin{exam}
\label{ex:candgen}
  In Figure~\ref{fig:graph}, consider the query graph $q$ with a GED threshold
  $\tau = 2$.  $g_2$ is a result of the query because {\sf ged}$(q, g_2) =
  \delta = 1$. As soon as $g_2$ is identified as a result, $\mathcal{R}(g_2,
  \tau + \delta = 3) = \{g_1, g_2, g_4, g_6, g_8\}$ becomes a candidate set of
  the query by Lemma~\ref{lm:candgen}. Among the candidates, $g_8$ is
  identified as a result of the query without verification, because
  $\mathcal{R}(g_2, \tau - \delta = 1) = \{g_2, g_8\}$ (by
  Lemma~\ref{lm:candgen}).
\end{exam}

\subsection{Search Algorithm}
\label{sec:nass}

For a query graph $q$ with a GED threshold $\tau$, let $\mathcal{A}$ be a set
of results identified so far (initially, $\mathcal{A} = \emptyset$). For the
ease of presentation, we abuse a candidate set to denote any subset
$\mathcal{C}$ of the database $\mathcal{D}$ that satisfies $(\mathcal{R}(q,
\tau) - \mathcal{A}) \subseteq \mathcal{C}$.

Using the label set-based lower bound in Definition~\ref{def:lf}, we first
generate an initial candidate set $\mathcal{C}_0 = \{g\ |\ g \in \mathcal{D}
\land {\sf lb_L}(q, g)~\leq~\tau\}$. Then, we repeatedly regenerate a candidate
set using Lemma~\ref{lm:candgen} whenever we find a result from the current
candidate set, which is initially $\mathcal{C}_0$. Definition~\ref{def:regen}
and Lemma~\ref{lm:regen} formally state the candidate regeneration.
%
%
In the following description, we assume that if we find a result $r$ such that
${\sf ged}(q, r) = \delta$, we also immediately identify the results in
$\mathcal{R}(r, \tau - \delta)$ by Lemma~\ref{lm:candgen}.

\begin{defn}[{\bf Candidate regeneration}]
\label{def:regen}  
  Given a candidate set $\mathcal{C}$, let $r$ be the first result identified
  in $\mathcal{C}$, and ${\sf ged}(q, r)$ be $\delta$. The refined candidate
  set ${\sf RC}(\mathcal{C})$ is defined as:
\[{\sf RC}(\mathcal{C}) = (\mathcal{C} - \mathcal{V}(\mathcal{C})) \cap
(\mathcal{R}(r, \tau+\delta) - \mathcal{R}(r, \tau-\delta)),\]
where $\mathcal{V}(\mathcal{C})$ denotes the set of those candidates in
$\mathcal{C}$ that are verified until the first result $r$ is identified.
\end{defn}

\begin{lmma}
  \label{lm:regen}
  The refined candidate set ${\sf RC}(\mathcal{C})$ in
  Definition~\ref{def:regen} contains all remaining results.
\end{lmma}
\begin{proof}
  It is obvious that $\mathcal{C} - \mathcal{V}(\mathcal{C})$ contains all
  remaining results. Because, by Lemma~\ref{lm:candgen}, $\mathcal{R}(r,
  \tau+\delta)$ is a candidate set of the query and $\mathcal{R}(r,
  \tau-\delta)$ contains identified results, $\mathcal{R}(r, \tau+\delta) -
  \mathcal{R}(r, \tau-\delta)$ also contains all remaining results. Therefore,
  $(\mathcal{C} - \mathcal{V}(\mathcal{C})) \cap (\mathcal{R}(r, \tau+\delta) -
  \mathcal{R}(r, \tau-\delta))$ contains all remaining results.
%
\end{proof}

In our method, the total number of candidates is dynamically determined because
candidates are repeatedly regenerated while processing a
query. Lemma~\ref{lm:candset} gives a formula to compute the set of total
candidates that are verified through GED computation, and
Corollary~\ref{coro:res} states how to obtain the result set of the query.
%

\begin{lmma}
\label{lm:candset}
Given a candidate set $\mathcal{C}$, 
the set of all candidates that are verified through GED computation is
  %
  \[{\sf NassCand}(\mathcal{C}) = \mathcal{V}(\mathcal{C}) \cup
  {\sf NassCand}({\sf RC}(\mathcal{C})),\]
  where {\sf NassCand}$(\emptyset) = \emptyset$.
\end{lmma}
\begin{proof}
  $\mathcal{V}(\mathcal{C})$ contains all candidates already verified, and
  $\sf{RC}(\mathcal{C})$ contains all results requiring GED verification by
  Lemma~\ref{lm:regen}.  Therefore, it can be proved by induction that ${\sf
    NassCand}(\mathcal{C})$ contains all candidates requiring GED verification.
\end{proof}

\newcommand*{\medcup}{\mathbin{\scalebox{1.5}{\ensuremath{\cup}}}}%
\begin{coro}
\label{coro:res}  
  Let $\{(r_1, \delta_1), \ldots, (r_n, \delta_n)\}$ be the results identified
  while verifying candidates in ${\sf NassCand}(\mathcal{C}_0)$, where
  $\mathcal{C}_0$ is an initial candidate set. The result set of the query is
  $\medcup_{i=1}^n \mathcal{R}(r_i, \tau - \delta_i)$.
\end{coro}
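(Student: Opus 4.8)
The plan is to prove the claimed equality $\mathcal{R}(q, \tau) = \medcup_{i=1}^n \mathcal{R}(r_i, \tau - \delta_i)$ by establishing the two set inclusions separately, and to handle the containment direction that is not immediate via an induction over the regeneration rounds.

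First I would dispose of the easy inclusion $\medcup_{i=1}^n \mathcal{R}(r_i, \tau - \delta_i) \subseteq \mathcal{R}(q, \tau)$. Each $r_i$ is a genuine result of the query found during verification, so ${\sf ged}(q, r_i) = \delta_i \leq \tau$. Applying the second inclusion of Lemma~\ref{lm:candgen} with $g = r_i$ gives $\mathcal{R}(r_i, \tau - \delta_i) \subseteq \mathcal{R}(q, \tau)$ for every $i$, and taking the union over $i$ yields the containment. For the reverse inclusion I would argue by induction over the regeneration rounds, relying on Lemma~\ref{lm:regen} and Lemma~\ref{lm:candset}. Writing $\mathcal{C}_i = {\sf RC}(\mathcal{C}_{i-1})$ for the candidate set entering round $i$, with $r_i$ the first result identified in $\mathcal{C}_{i-1}$, the invariant to maintain is that $\mathcal{C}_i$ contains every result of the query not yet identified after round $i$. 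This is precisely Lemma~\ref{lm:regen} applied repeatedly, with base case $\mathcal{R}(q, \tau) \subseteq \mathcal{C}_0$ holding because $\mathcal{C}_0$ is an initial candidate set. Since each round removes the nonempty set $\mathcal{R}(r_i, \tau - \delta_i)$ (nonempty because $r_i \in \mathcal{R}(r_i, \tau - \delta_i)$, as ${\sf ged}(r_i, r_i) = 0 \leq \tau - \delta_i$) and $\mathcal{D}$ is finite, the recursion of Lemma~\ref{lm:candset} terminates after finitely many rounds, say $n$, when the candidate set is exhausted and no further first result exists. At termination every result of the query has been identified either as some first result $r_i$ or as a member of a subtracted set $\mathcal{R}(r_i, \tau - \delta_i)$; in both cases the result lies in $\medcup_{i=1}^n \mathcal{R}(r_i, \tau - \delta_i)$, since $r_i$ itself belongs to $\mathcal{R}(r_i, \tau - \delta_i)$. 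Combining the two inclusions gives the equality.

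The step I expect to be the main obstacle is making the termination-and-coverage argument watertight, in particular verifying that the subtraction in ${\sf RC}$ never discards an unidentified result. The removed set $\mathcal{R}(r, \tau - \delta)$ consists solely of results that have \emph{already} been identified (again by Lemma~\ref{lm:candgen}), so its removal preserves rather than violates the invariant "$\mathcal{C}_i$ contains all not-yet-identified results." I would also treat the degenerate case $n = 0$ explicitly: there $\mathcal{R}(q, \tau) = \emptyset$, the first verification round finds no result, and the empty union correctly returns the empty result set, so the statement holds vacuously.
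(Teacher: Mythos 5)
Your proof is correct and follows essentially the same route as the paper's: both rest on the decomposition of $\mathcal{R}(q,\tau)$ into the identified first results together with the index-retrieved sets (the paper cites Lemma~\ref{lm:regen} and Lemma~\ref{lm:candset} for this, which you unpack as an explicit induction over regeneration rounds), and both finish with the same absorption step that $r_i \in \mathcal{R}(r_i, \tau - \delta_i)$ because ${\sf ged}(r_i, r_i) = 0$ and $\delta_i \leq \tau$. Your added detail---the two-inclusion split via Lemma~\ref{lm:candgen}(2), the termination argument, and the $n=0$ case---merely makes explicit what the paper compresses into its citations.
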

\begin{proof}
  $\mathcal{R}(q, \tau) = \medcup_{i=1}^{n} (\{r_i\} \cup \mathcal{R}(r_i, \tau
  - \delta_i))$ by Lemma~\ref{lm:regen} and Lemma~\ref{lm:candset}.  $\forall
  i\ r_i \in \mathcal{R}(r_i, \tau - \delta_i)$, since ${\sf ged}(r_i, r_i) =
  0$ by Lemma~\ref{lm:metric} and $0 \leq \delta_i \leq \tau$. Therefore,
  $\medcup_{i=1}^{n} \{r_i\} \subseteq \medcup_{i=1}^n \mathcal{R}(r_i, \tau -
  \delta_i)$.
\end{proof}  


\begin{exam}
  \label{ex:candset}
  Consider the query in Example~\ref{ex:simsearch}, again. The initial
  candidate set is $\mathcal{C}_0 = \{g_1, \ldots, g_8\}$, since $\forall i \in
  [1,8]\ {\sf lb_L}(q, g_i) \leq \tau$ but ${\sf lb_L}(q, g_9) = 4 > \tau$,
  where $\tau = 2$.  To evaluate the query, we verify $g_1$ and $g_2$ to find
  the first result $g_2$ because ${\sf ged}(q, g_2) = 1$. Therefore,
  $\mathcal{V}(\mathcal{C}_0) =\{g_1, g_2\}$. After $g_2$ is identified as a
  result, by Definition~\ref{def:regen}, remaining candidates of the query are
  refined to
  \begin{align*}
    {\sf RC}(\mathcal{C}_0) &= (\mathcal{C}_0 - \mathcal{V}(\mathcal{C}_0))
    \cap (\mathcal{R}(g_2, 2+1) - \mathcal{R}(g_2, 2-1))\\
    &= (\{g_1, \ldots, g_8\} - \{g_1, g_2\}) \cap (\{g_1, g_2, g_4, g_6, g_8\} - \{g_2, g_8\})\\
    &= \{g_4, g_6\}.
  \end{align*}
  Let $\mathcal{C}_1 = RC(\mathcal{C}_0)$. Given the refined candidate set
  $\mathcal{C}_1$, we identify $g_4$ as the first result in $\mathcal{C}_1$
  because ${\sf ged}(q, g_4) = 2$. Hence, $\mathcal{V}(\mathcal{C}_1) =
  \{g_4\}$ and $\mathcal{C}_1$ is refined to
  \begin{align*}
    {\sf RC}(\mathcal{C}_1) &= (\mathcal{C}_1 - \mathcal{V}(\mathcal{C}_1))
    \cap (\mathcal{R}(g_4, 2+2) - \mathcal{R}(g_4, 2-2))\\
    &= (\{g_4, g_6\} - \{g_4\}) \cap (\{g_1, g_2, g_4, g_8\} - \{g_4\}) 
    = \emptyset.
  \end{align*}
  By Lemma~\ref{lm:candset}, all the verified candidates are contained in
  \begin{align*}
    {\sf NassCand}(\mathcal{C}_0) &= \mathcal{V}(\mathcal{C}_0) \cup
    {\sf NassCand}({\sf RC}(\mathcal{C}_0))\\
    &= \mathcal{V}(\mathcal{C}_0) \cup \mathcal{V}(\mathcal{C}_1) \cup
    {\sf NassCand}({\sf RC}(\mathcal{C}_1))\\
    &= \{g_1, g_2\} \cup \{g_4\} \cup {\sf NassCand}(\emptyset)\\
    &= \{g_1, g_2, g_4\}.
  \end{align*}
  Since $(g_2, 1)$ and $(g_4, 2)$ are identified results from ${\sf
    NassCand}(\mathcal{C}_0)$, the result set of the query is $\mathcal{R}(g_2,
  2 - 1) \cup \mathcal{R}(g_4, 2 - 2) = \{g_2, g_8\} \cup \{g_4\} = \{g_2, g_4,
  g_8\}$ by Corollary~\ref{coro:res}.
%
  %
\end{exam}

\begin{algorithm}[htbp]
\SetKwInOut{input}{input}
\SetKwInOut{output}{output}
\input{$\mathcal{C}$ is a candidate set,\\
  $q$ is a query graph, and $\tau$ is a GED threshold}
\output{query results in $\mathcal{C}$}
\BlankLine
 sort graphs in $\mathcal{C}$ by their GED lower bounds\;
 \BlankLine

 $\mathcal{V} \leftarrow \emptyset$\;
 \ForEach{\rm{candidate} $g \in \mathcal{C}$}{
   $\delta \leftarrow$ {\sf NassGED}$(q, g, \tau)$; \tcp*[f]{refer to Algorithm~\ref{alg:ged}}\\
   $\mathcal{V}$ $\leftarrow$ $\mathcal{V} \cup \{g\}$\;
   \If{$\delta \leq \tau$}{
     $\mathcal{A} \leftarrow \mathcal{R}(g, \tau - \delta)$\;
     
     $\mathcal{C}' \leftarrow (\mathcal{C} - \mathcal{V}) \cap
     (\mathcal{R}(g, \tau + \delta) - \mathcal{A})$\;
      \Return $\mathcal{A}\ \cup$ {\sf Nass}($\mathcal{C}', q, \tau$)\;}}
 \Return $\emptyset$\;
\caption{\textsf{Nass}($\mathcal{C}$, $q$, $\tau$)}
\label{alg:nass}
\end{algorithm}

Algorithm~\ref{alg:nass} outlines our graph similarity search algorithm based
on the proposed candidate generation method. Initially, the algorithm is called
with a candidate set generated using the label set-based lower bound.
%
%
Given a candidate set $\mathcal{C}$, the algorithm sorts candidates by their
GED lower bounds (Line~1). By sorting the candidates, it first verifies those
candidates that are more likely to be results. It computes the GED between the
query and each candidate using our GED computation algorithm {\sf NassGED},
which will be presented in Section~\ref{sec:GEDalgorithm} (Line~4).
%
%
Candidates verified until the first result is found are appended into
$\mathcal{V}$ (Line~5).
%
If the algorithm encounters the first result (Line~6), it appends the results
in $\mathcal{R}(g, \tau - \delta)$ into $\mathcal{A}$ by
Corollary~\ref{coro:res} (Line~7), and refines remaining candidates based on
Definition~\ref{def:regen} (Line~8). Then, it continues to verify and refine
candidates (Line~9). If it cannot find any result from $\mathcal{C}$, it
returns an empty set (Line~10).



\vspace*{0.7em}
\noindent
{\bf Correctness of Algorithm~\ref{alg:nass}.}  Since the algorithm scans the
candidate set $\mathcal{C}$ sequentially, it always collects the first result
along with those results that do not require GED verification by
Lemma~\ref{lm:candgen} (Line~7). Then, it regenerates a candidate set which is
assured to contain all remaining results requiring verification by
Lemma~\ref{lm:regen} (Line~8). Therefore, it can be proved by induction that
the algorithm correctly collects all results.

\subsection{Analysis of Our Algorithm}
\label{sec:analysis}

We analyze our algorithm by estimating the number of candidates requiring GED
verification. Before we estimate it, we briefly review candidate sets generated
by existing solutions. Given a query $q$ with a GED threshold $\tau$, all
existing candidate generation techniques use a GED lower bound function ${\sf
  f_{lb}}$ to generate a candidate set $\mathcal{C}_{\sf f_{lb}}(q, \tau) =
\{g\ |\ g \in \mathcal{D} \land {\sf f_{lb}}(q, g) \leq \tau
\}$. Proposition~\ref{po:inclusion} states the relationship between candidate
sets generated by a GED lower bound function ${\sf f_{lb}}$.
\begin{prop}
  \label{po:inclusion}
  For any GED lower bound function ${\sf f_{lb}}$, the following implication
  holds:
  \[\forall \tau_1, \tau_2 \ \ \tau_1 \leq \tau_2 \implies \mathcal{C}_{\sf f_{lb}}(q, \tau_1)
  \subseteq \mathcal{C}_{\sf f_{lb}}(q, \tau_2).\]
\end{prop}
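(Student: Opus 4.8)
The plan is to establish the inclusion by a direct element-chasing argument that simply unwinds the definition of $\mathcal{C}_{\sf f_{lb}}(q, \tau)$ and invokes transitivity of $\leq$. Because the candidate set is defined purely by a threshold condition on a fixed function, the statement is essentially immediate and presents no genuine obstacle; the only care required is to carry along both defining conditions simultaneously, namely membership in $\mathcal{D}$ and the lower-bound inequality ${\sf f_{lb}}(q, g) \leq \tau$.

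First I would fix $\tau_1 \leq \tau_2$ and take an arbitrary $g \in \mathcal{C}_{\sf f_{lb}}(q, \tau_1)$. By the definition of the candidate set, this membership unpacks into two facts: $g \in \mathcal{D}$ together with ${\sf f_{lb}}(q, g) \leq \tau_1$. Chaining inequalities, from ${\sf f_{lb}}(q, g) \leq \tau_1$ and $\tau_1 \leq \tau_2$ I obtain ${\sf f_{lb}}(q, g) \leq \tau_2$. The condition $g \in \mathcal{D}$ is untouched by the change of threshold, so both defining conditions of $\mathcal{C}_{\sf f_{lb}}(q, \tau_2)$ are satisfied, whence $g \in \mathcal{C}_{\sf f_{lb}}(q, \tau_2)$. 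Since $g$ was arbitrary, the desired inclusion $\mathcal{C}_{\sf f_{lb}}(q, \tau_1) \subseteq \mathcal{C}_{\sf f_{lb}}(q, \tau_2)$ follows.

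The one conceptual point worth stressing is that the argument uses nothing about the internal structure of ${\sf f_{lb}}$ beyond its being a fixed real-valued function of the pair $(q, g)$; it therefore holds uniformly for every lower-bound function, which is precisely the content of the universal quantifier over ${\sf f_{lb}}$ in the statement. This monotonicity of the candidate set in the threshold is exactly the formal fact underlying the paper's motivation: for any fixed filter, enlarging $\tau$ can only add candidates, never remove them, so existing feature-based candidate sets grow monotonically toward the obvious upper bound—in contrast to the dynamically regenerated sets of the proposed approach.
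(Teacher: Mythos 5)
Your proof is correct and takes essentially the same route as the paper's: both fix an arbitrary $g \in \mathcal{C}_{\sf f_{lb}}(q, \tau_1)$, chain ${\sf f_{lb}}(q, g) \leq \tau_1 \leq \tau_2$ by transitivity, and conclude membership in $\mathcal{C}_{\sf f_{lb}}(q, \tau_2)$. Your version is merely slightly more explicit in carrying along the $g \in \mathcal{D}$ condition, which the paper leaves implicit.
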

\begin{proof}
For every candidate $g \in \mathcal{C}_{\sf f_{lb}}(q, \tau_1)$, ${\sf
  f_{lb}}(q, g) \leq \tau_1 \leq \tau_2$. Therefore, $g \in \mathcal{C}(q,
\tau_2)$.
\end{proof}  
Given an initial candidate set, we assume that all results of the query are
uniformly distributed in the candidate set.  Based on the assumption,
Lemma~\ref{lm:scan} estimates the expected number of candidates requiring GED
computation.

\begin{lmma}
\label{lm:scan}    
Given an initial candidate set $\mathcal{C}_{\sf lb_L}(q, \tau)$ generated
using the label set-based lower bound ${\sf lb_L}$, the expected number of
candidates generated by {\sf Nass} that require GED computation is as follows:
  \[|{\sf NassCand}(\mathcal{C}_{\sf lb_L}(q,\tau))| < \frac{|\mathcal{C}_{\sf lb_L}(q, \delta_{min})|}
     {|\mathcal{R}(q, \delta_{min})| + 1} + |\mathcal{R}(r, \tau +
     \delta_{min})|,\]
where $\delta_{min} = min_{g \in \mathcal{R}(q, \tau)}\ {\sf ged}(q, g)$ and
$r$ is a result of the query such that ${\sf ged}(q, r) = \delta_{min}$.
\end{lmma}
\begin{proof}
  Since our algorithm sorts initial candidates by their GED lower bounds,
  candidates in $\mathcal{C}_{\sf lb_L}(q, \delta_{min})$ are verified first
  by Proposition~\ref{po:inclusion}.
  The first result whose distance is $\delta$ should be contained in
  $\mathcal{C}_{\sf lb_L}(q, \delta_{min})$, where $\delta_{min} \leq \delta \leq
  \tau$. Because there are at least $|\mathcal{R}(q, \delta_{min})|$ results in
  $\mathcal{C}_{\sf lb_L}(q, \delta_{min})$, by the uniformity assumption, the algorithm
  can find the first result after verifying at most $n_{\mathcal{V}} =
  \frac{|\mathcal{C}_{\sf lb_L}(q, \delta_{min})|}{|\mathcal{R}(q, \delta_{min})| + 1}$
  candidates.
  By the assumption again, a result $r$ whose distance is $\delta_{min}$ should
  be found after verifying $n_{\mathcal{V}}$ candidates. Therefore, the number
  of regenerated candidates is less than $|\mathcal{R}(r, \tau +
  \delta_{min})|$ by Lemma~\ref{lm:candgen} and Lemma~\ref{lm:regen}.
\end{proof}  

Based on Lemma~\ref{lm:scan}, the following example estimates the number of
candidates verified by {\sf Nass} using the empirical results in
Table~\ref{tbl:expr-intro-cand}. We note that the candidate generation method
using the label set-based lower bound ${\sf lb_L}$ is the label filtering
method {\sf LF} in Table~\ref{tbl:expr-intro-cand}.
\begin{exam}
\label{ex:numcand}
Let's consider the case where $\tau = 4$ and $\delta_{min} = \tau - 1 = 3$. In
Table~\ref{tbl:expr-intro-cand}, $|\mathcal{C}_{\sf lb_L}(q, \delta_{min})| =
285$ and $|\mathcal{R}(q, \delta_{min})| = 1.26$.  Therefore, we can expect
285/(1+1.26) = 126.1 candidates verified until the first result identified.
Because the average number of results on threshold $\tau + \delta_{min}$ is
$18.45$ in Table~\ref{tbl:expr-intro-cand}, the expected total number of
candidates verified by our algorithm is at most $126.1 + 18.45 = 144.55$, which
much less than that of existing techniques on that threshold (i.e., $\tau =
4$).
\end{exam}
We remark that any existing filtering method, i.e., lower bound
functions, can be used to generate an initial candidate set for {\sf Nass}. If
we use {\sf Inves} filtering method, for example, the expected number of
candidates becomes 44.55 in Example~\ref{ex:numcand}.  Nonetheless, we use a
basic filtering method {\sf LF} in generating initial candidates because our
GED computation algorithm in Section~\ref{sec:ged} integrates existing
filtering techniques.

\section{GED Computation}
\label{sec:ged}

In this section, we first introduce our GED computation model. Then, we propose
a filtering pipeline for GED computation. We finally present the details of our
GED computation algorithm.

\subsection{Motivation and GED Computation Model}
\label{sec:gedmodel}

As presented in Section~\ref{sec:gedcomputation}, existing solutions compute
a lower bound of a partial mapping $m$ between two graphs $g_1$ and $g_2$ as:
%
\[{\sf lb_M}(m) = {\sf ec}(m) + \mathcal{B}(m) + {\sf lb_L}(g_1 \backslash g_1|_m, g_2
\backslash g_2|_m).\]
In the formula, ${\sf ec}(m)$ is the tight bound and $\mathcal{B}(m)$ is a
relatively precise bound. However, the label set-based lower bound ${\sf
  lb_L}$, which is used for the unmapped subgraphs, is very loose because it
does not take structural differences into considerations. As a consequence,
existing solutions suffer from a huge search space. To address the problem,
throughout Section~\ref{sec:ged}, we focus on tightening the lower bound
between the unmapped subgraphs by introducing a lower bound function, which
exploits a few existing GED lower bounds.

Let ${\sf f_{lb}}$ be a GED lower bound function. If ${\sf lb_M}(m) = {\sf
  ec}(m) + \mathcal{B}(m) + {\sf f_{lb}}(g_1 \backslash g_1|_{m}, g_2
\backslash g_2|_{m}) > \tau$, by Definition~\ref{def:lb}, we can prune the
subtree rooted by $m$ from the search tree. By rewriting the inequality
focusing on ${\sf f_{lb}}$, we establish the following filtering condition.
\begin{cond}
  \label{cond:filter}
Given a partial mapping $m$, we can prune the subtree rooted by $m$ if 
${\sf f_{lb}}(g_1 \backslash g_1|_m, g_2 \backslash g_2|_{m}) >
\tau - ({\sf ec}(m) + \mathcal{B}(m)).$
\end{cond}
Using Condition~\ref{cond:filter}, we model GED computation as a repetition of
filtering dissimilar unmapped subgraphs while traversing the search tree. To
efficiently obtain a tight ${\sf f_{lb}}(g_1 \backslash g_1|_m, g_2 \backslash
g_2|_{m})$, in Section~\ref{sec:filtering}, we judiciously select and carefully
apply a series of existing feature-based lower bound functions, which have been
used in generating candidates.
As pointed out in Section~\ref{sec:intro}, existing feature-based filtering
techniques have a limitation in filtering dissimilar graphs. Nevertheless, we
observe that they can be effectively used in pruning dissimilar unmapped
subgraphs for GED computation as stated in Claim~\ref{claim:filter}.

\begin{claim}
\label{claim:filter}  
  Given a partial mapping $m$ between $g_1$ and $g_2$, 
  \[{\sf Pr}[{\sf f_{lb}}(g_1, g_2) > \tau] \leq {\sf Pr}[{\sf f_{lb}}(g_1 \backslash g_1|_{m}, g_2 \backslash g_2|_{m}) > \tau'] \]
  for any lower bound function ${\sf f_{lb}}$, where ${\sf Pr}[p]$ denotes the
  probability that $p$ is true, and $\tau' =\tau - ({\sf ec}(m) +
  \mathcal{B}(m))$.
\end{claim}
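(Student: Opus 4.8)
The plan is to reduce the probabilistic inequality to a deterministic containment of events and then invoke monotonicity of probability. Concretely, I would first establish the pointwise decomposition inequality
\[{\sf f_{lb}}(g_1, g_2) \leq {\sf ec}(m) + \mathcal{B}(m) + {\sf f_{lb}}(g_1 \backslash g_1|_m, g_2 \backslash g_2|_m),\]
valid for every realization of the graph pair. Once this is in hand, the event $\{{\sf f_{lb}}(g_1, g_2) > \tau\}$ forces ${\sf ec}(m) + \mathcal{B}(m) + {\sf f_{lb}}(g_1 \backslash g_1|_m, g_2 \backslash g_2|_m) > \tau$, i.e. ${\sf f_{lb}}(g_1 \backslash g_1|_m, g_2 \backslash g_2|_m) > \tau'$ with $\tau' = \tau - ({\sf ec}(m) + \mathcal{B}(m))$. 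Hence $\{{\sf f_{lb}}(g_1,g_2) > \tau\} \subseteq \{{\sf f_{lb}}(g_1 \backslash g_1|_m, g_2 \backslash g_2|_m) > \tau'\}$, and monotonicity of ${\sf Pr}[\cdot]$ delivers the claim. This reduction also explains why no distributional assumption on the inputs is required: the containment holds for each individual pair, so it transfers to any probability measure placed on pairs of graphs.

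The substance therefore lies in the decomposition inequality, which I would prove by exploiting the additive way features split across the partition of each graph into its mapped subgraph, its bridges, and its unmapped subgraph. For the label set-based bound ${\sf lb_L}$ this is concrete: vertices partition as $V(g) = V(g|_m) \uplus V(g \backslash g|_m)$ and edges as $E(g) = E(g|_m) \uplus (\text{bridges}) \uplus E(g \backslash g|_m)$. I would then show that $\Gamma$ is subadditive under multiset union, $\Gamma(A_1 \uplus A_2,\, B_1 \uplus B_2) \leq \Gamma(A_1, B_1) + \Gamma(A_2, B_2)$. The only nontrivial step is the superadditivity of multiset intersection, which reduces to the elementary inequality ${\sf min}(a+c,\, b+d) \geq {\sf min}(a,b) + {\sf min}(c,d)$ applied coordinate-wise, combined with subadditivity of ${\sf max}$. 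Applying this to the vertex and edge multisets splits ${\sf lb_L}(g_1, g_2)$ into a mapped part, a bridge part, and an unmapped part.

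It remains to dominate the first two parts by ${\sf ec}(m)$ and $\mathcal{B}(m)$. The mapped part is bounded because ${\sf lb_L}$ is itself a GED lower bound, so the mapped feature difference is at most ${\sf ged}(g_1|_m, g_2|_m)$, and $m$ exhibits an explicit edit sequence of cost ${\sf ec}(m)$ between the mapped subgraphs (Definition~\ref{def:edcost} sums only within-mapped vertex and edge differences); hence the mapped part is $\leq {\sf ec}(m)$. The bridge part is bounded by grouping each bridge edge under its unique mapped endpoint and applying the same subadditivity of $\Gamma$, which collapses the aggregate bridge difference into $\sum_{u \smapsto v \in m} \Gamma(L^{m}_{br}(u), L^{m}_{br}(v)) = \mathcal{B}(m)$, exactly as in Definition~\ref{def:bc}. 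Summing the three bounds yields the decomposition inequality for ${\sf lb_L}$, and the identical partition-of-features argument applies to the other feature-based lower bounds (path/tree $q$-grams, branches, stars, partitions), since each such feature is localized to one of the three parts.

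The main obstacle is the phrase ``for any lower bound function ${\sf f_{lb}}$'': the decomposition inequality is not automatic for an arbitrary GED lower bound, because chaining through ${\sf ged}$ gives ${\sf f_{lb}}(g_1,g_2) \leq {\sf ged}(g_1,g_2) \leq {\sf ec}(m) + \mathcal{B}(m) + {\sf ged}(g_1 \backslash g_1|_m, g_2 \backslash g_2|_m)$, and the trailing ${\sf ged}$ term cannot be replaced by ${\sf f_{lb}}$ in the favorable direction. The inequality genuinely relies on the feature-additivity enjoyed by the specific filters in use. I would therefore either state the decomposition as a property of the feature-based family and verify it filter-by-filter, or, if a function-agnostic reading is intended, argue distributionally that the smaller unmapped subgraphs together with the reduced threshold $\tau' \leq \tau$ make the unmapped filter at least as likely to fire, using the threshold-monotonicity of the candidate sets recorded in Proposition~\ref{po:inclusion}.
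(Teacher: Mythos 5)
Your logical skeleton is identical to the paper's: the paper also derives the claim by passing from the pointwise inequality ${\sf f_{lb}}(g_1,g_2) \leq {\sf ec}(m) + \mathcal{B}(m) + {\sf f_{lb}}(g_1 \backslash g_1|_m,\, g_2 \backslash g_2|_m)$ to the containment of events and thence to the probability statement. The difference lies entirely in how that pointwise inequality is obtained, and this is where your proposal has a genuine gap. The paper does not prove the inequality at all: it \emph{postulates}, as an explicitly approximate assumption (footnoted as admitting ``subtle cases'' of failure due to the inaccuracy of the bridge cost, and justified only empirically over roughly $2.5 \times 10^7$ mappings), that ${\sf lb_M}$ is monotone under prefix extension, and then specializes that assumption to the prefix $\emptyset$. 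So in the paper the claim is a heuristic, not a theorem, and your closing suspicion --- that the inequality cannot be established for an arbitrary lower bound function --- is exactly right.

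The gap is your proposed repair: the assertion that the feature-decomposition argument ``applies to the other feature-based lower bounds (path/tree $q$-grams, branches, stars, partitions), since each such feature is localized to one of the three parts.'' For branches this localization is false: the branch of an \emph{unmapped} vertex in the full graph contains the labels of its incident bridges, whereas $\mathcal{B}(m)$ charges bridge mismatches only at \emph{mapped} endpoints. Concretely, for ${\sf lb_C}$ (one of the two filters {\sf Nass} actually uses), take $m = \{u_1 \mapsto v_1, u_2 \mapsto v_2\}$ with all vertex labels equal and all edge labels equal to $a$; in $g_1$ both $u_1$ and $u_2$ have a bridge to the \emph{same} unmapped vertex $w_1$ while $w_2$ is isolated, and in $g_2$ the bridges go to distinct vertices $x_1$ and $x_2$. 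Then ${\sf ec}(m) = 0$, $\mathcal{B}(m) = 0$ (every mapped vertex has bridge label multiset $\{a\}$ on both sides), and the unmapped subgraphs are two isolated, identically labeled vertices on each side, so ${\sf lb_C}$ of the unmapped subgraphs is $0$; yet on the full graphs the unmapped branches are $(l_w, \{a,a\})$ and $(l_w, \emptyset)$ versus $(l_w, \{a\})$ and $(l_w, \{a\})$, forcing ${\sf lb_C}(g_1, g_2) = 1 > 0$. Hence the pointwise inequality is simply false for ${\sf lb_C}$ (and this is precisely the failure mode the paper's footnote concedes), so no filter-by-filter verification can close the claim as stated. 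What survives of your argument is genuinely valuable: your proof for ${\sf lb_L}$ --- subadditivity of $\Gamma$ via superadditivity of multiset intersection, the mapped part dominated by ${\sf ec}(m)$, the bridge part dominated by $\mathcal{B}(m)$ --- is correct and rigorous, which is more than the paper establishes even for that special case. But to cover the lower bound functions the claim actually quantifies over, you would have to retreat, as the paper does, from a proof to an empirically validated assumption.
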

\begin{proof}
  In this proof, we use an approximate assumption\footnote{Due to
    the inaccuracy of the bridge cost, there can be subtle cases that ${\sf
      lb_M}(m_1) > {\sf lb_M}(m_2)$, but the assumption is valid in most cases for
    any lower bound function ${\sf f_{lb}}$. For example, in our all experiments
    in Section~\ref{sec:expr}, there was no mapping that violates the
    assumption (the total number of different mappings was about $2.5 \times
    10^7$ in our experiments).}  that ${\sf lb_M}(m_1) \leq {\sf lb_M}(m_2)$ for
  any mappings $m_1$ and $m_2$ such that $m_1$ is a prefix of $m_2$. Since an empty
  mapping $\emptyset$ is a prefix of any mapping, by the assumption,
  \begin{align*}
    {\sf lb_M}(\emptyset) &= {\sf f_{lb}}(g_1, g_2)\\
    &\leq {\sf ec}(m) + \mathcal{B}(m) + {\sf f_{lb}}(g_1 \backslash g_1|_{m}, g_2 \backslash g_2|_{m}) = {\sf lb_M}(m).
  \end{align*}
  Thus, $\tau - {\sf f_{lb}}(g_1, g_2) \geq  \tau' - {\sf f_{lb}}(g_1 \backslash g_1|_{m}, g_2 \backslash g_2|_{m}))$, which implies ${\sf Pr}[{\sf f_{lb}}(g_1, g_2) > \tau] \leq {\sf Pr}[{\sf f_{lb}}(g_1 \backslash g_1|_{m}, g_2 \backslash g_2|_{m}) > \tau']$.
\end{proof}

Condition~\ref{cond:filter} also enables us to design a new GED computation
algorithm that seamlessly integrates the filtering phase into GED
computation. If we apply Condition~\ref{cond:filter} to the root node of the
search tree (i.e., $m=\emptyset$), the condition becomes ${\sf f_{lb}}(g_1,
g_2) > \tau$, which is the condition used in the filtering phase of existing
search techniques. For example, if we use the online partitioning-based lower bound of
{\sf Inves}~\cite{INVES} as ${\sf f_{lb}}$ and apply it to the root node, we
can make a GED computation algorithm that encompasses the candidate refinement
step of {\sf Inves}. We remark that existing GED algorithms compute the lower
bound of a mapping that has at least one mapped vertex pair.
We will present the details of our GED computation algorithm in
Section~\ref{sec:GEDalgorithm}.

\COMMENTOUT{
Another difference with existing solutions is that we seamlessly integrate the
filtering phase into GED computation by applying Condition~\ref{cond:filter} to
the root node (i.e., $m = \emptyset$) of the search tree. When $m = \emptyset$,
Condition~\ref{cond:filter} becomes ${\sf f_{lb}}(g_1, g_2) > \tau$, which is
the condition used in existing filtering techniques. For example, if we use the
online partitioning-based lower bound of {\sf Inves}~\cite{INVES} as ${\sf
  f_{lb}}$, and apply it to the root node, we can make a GED computation
algorithm that encompasses the candidate refinement step of {\sf Inves}. We
remark that existing GED algorithms compute the lower bound of a mapping that
has at least one mapped vertex pair.
We will present the details of our GED computation algorithm in
Section~\ref{sec:GEDalgorithm}.
}

\subsection{Filtering Pipeline in GED Computation}
\label{sec:filtering}

To tighten ${\sf f_{lb}}(g_1 \backslash g_1|_{m}, g_2 \backslash g_2|_{m})$, we
apply a series of filtering techniques. There is a trade-off between the
efficiency in computing ${\sf f_{lb}}$ and the tightness of ${\sf f_{lb}}$.
Because the number of nodes to visit in the search tree grows exponentially,
efficient computation of a lower bound is crucial. With tight lower bounds, on
the other hand, we can prune more subtrees in the search tree, and the number
of nodes to visit can be reduced substantially. Therefore, the goal here is to
judiciously select filtering techniques adequate for reducing the search space,
and to carefully apply selected filters for efficient computation. To speed up
the computation of ${\sf f_{lb}}$, we will also discuss implementation issues
in Section~\ref{sec:gedimpl}.

We first introduce two existing lower bound functions we select for GED
computation. Given a partial mapping $m$, for simplicity, we use $g'$ to denote
$g \backslash g|_{m}$ for a graph $g$ in this subsection.

\begin{defn}[{\bf Compact branch-based lower bound~\cite{MIXED}}]
\label{def:cbranch}  
  Giv\-en two vertices $u$ and $v$, their branch structures are denoted as $b_u =
  (l(u), ES(u))$ and $b_v = (l(v), ES(v))$, where $ES(w) =
  \{l(e)\ |\ \text{edge } e \text{ is}$ $\text{adjacent to } w\}$. The compact
  distance between $b_u$ and $b_v$ is defined as:
  \[{\sf bed_C}(b_u, b_v) =
  \begin{cases}
    0,   & \text{\rm if } l(u) = l(v) \land ES(u) = ES(v) \\
    1/2, & \text{\rm if } l(u) = l(v) \land ES(u) \neq ES(v)\\
    1,   & \text{\rm if } l(u) \neq l(v).
  \end{cases}\]
  Compact branch-based lower bound is defined as:
  \[{\sf lb_C}(g'_1, g'_2) = \min_P \sum_{b_u \in B(g'_1)} {\sf bed_C}(b_u, P(b_u)),\]
  where $B(g)$ is the multiset of the branches of a graph $g$, and $P$ is a
  bijection from $B(g'_1)$ to $B(g'_2)$. If $|B(g'_1)| < |B(g'_2)|$, $|B(g'_2)| - |B(g'_1)|$
  blank branches are added into $B(g'_1)$, and vice versa.
\end{defn}
\begin{figure}[htbp]
  \centering
  \subfigure[]{\includegraphics[height=1.7cm]{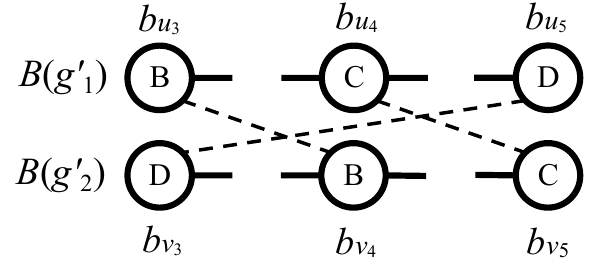}}\hspace{0.6cm}
  \subfigure[]{\includegraphics[height=1.7cm]{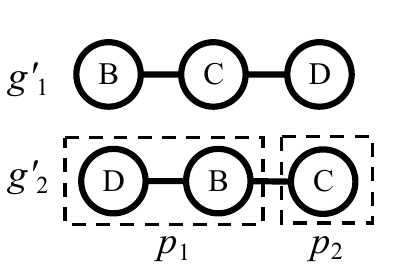}}
  \caption{Branches and partitioning of $g'_1$ and $g'_2$}
  \label{fig:branchpartition}
\end{figure}
\begin{exam}
\label{ex:branch}  
  Let's recall Example~\ref{ex:mapping} in Section~\ref{sec:gedcomputation}.
  The label set-based lower bound of the unmapped subgraphs was 0. This lower
  bound can be tightened by using the compact branch-based lower bound function
  as follows. Figure~\ref{fig:branchpartition}(a) shows the branch multisets
  $B(g'_1) = \{b_{u_3}, b_{u_4}, b_{u_5}\}$ and $B(g'_2) = \{b_{v_3}, b_{v_4},
  b_{v_5}\}$, and the bijection $P = \{b_{u_3} \smapsto b_{v_4}, b_{u_4}
  \smapsto b_{v_5}, b_{u_5} \smapsto b_{v_3}\}$ that minimizes ${\sf
    lb_C}(g'_1, g'_2)$. The lower bound ${\sf lb_C}(g'_1, g'_2) = 1$ because
  ${\sf bed_C}(b_{u_3}, b_{v_4}) = 1/2$, ${\sf bed_C}(b_{u_4}, b_{v_5}) = 1/2$,
  and ${\sf bed_C}(b_{u_5}, b_{v_3}) = 0$.
\end{exam}  
To compute ${\sf lb_C}(g'_1, g'_2)$, we can use an $O(n\log n)$ algorithm
proposed in \cite{MIXED}, where $n = |B(g'_1)|$. The compact branch-based lower
bound is used for generating candidate graphs in \cite{MIXED}.
Since it is efficiently computed and captures differences in local structures
(i.e., branches) of graphs, we select it for GED computation.

\begin{defn}[{\bf Partition-based lower bound~\cite{INVES}}]
\label{def:partition}    
    Consider we decompose $g'_2$ into partitioned subgraphs. The
    partition-based lower bound is defined as:
    \[{\sf lb_P}(g'_1, g'_2) = |\{p\ |\ p \in \mathcal{P}(g'_2) \land p
    \not\sqsubseteq g'_1\}|,\]
    where $\mathcal{P}(g'_2)$ denotes the set of partitions of $g'_2$, and
    $p \not\sqsubseteq g'_1$ denotes $p$ is not subgraph isomorphic to $g'_1$.
\end{defn}

\begin{exam}
  In Example~\ref{ex:mapping}, consider we decompose $g'_2$ into two
  partitions $p_1$ and $p_2$ as depicted in
  Figure~\ref{fig:branchpartition}(b). Since $p_1 \not\sqsubseteq g'_1$ and
  $p_2 \sqsubseteq g'_1$, the lower bound ${\sf lb_P}(g'_1, g'_2) = 1$.
\end{exam}  

To partition $g'_2$ in Definition~\ref{def:partition}, we use the online
partitioning algorithm proposed in {\sf Inves}~\cite{INVES}\footnote{To save
  computation time, we modified {\sf Inves} by disabling the {\it rematch}
  functionality and setting $\alpha$, i.e., the worst case prevention
  parameter, to 6 (see \cite{INVES} for the details).}. With partitions of
graphs, we can capture structural differences between graphs, and thus we can
expect a more accurate bound in general.
However, it is expensive to compute ${\sf lb_P}$ due to subgraph isomorphism
tests. In this paper, therefore, we use ${\sf lb_P}$ only when other lower
bound functions cannot filter out $g'_1$ and $g'_2$.

Given a partial mapping $m$ and a GED threshold $\tau$, to reduce the
overhead of computing lower bounds, we incrementally tighten the lower bound of
$m$ as follows.
\[{\sf lb_M}(m)=
  \begin{cases}
    {\sf ec}(m), & \text{\phantom{****} if } {\sf ec}(m) > \tau\\
    {\sf ec}(m) + \mathcal{B}(m), & \text{else if } {\sf ec}(m) + \mathcal{B}(m) > \tau\\
    {\sf ec}(m) + \mathcal{B}(m)&\\
     \phantom{--..} + {\sf f_{lb}}(g'_1, g'_2), & \text{otherwise,}
  \end{cases}
\]
and ${\sf f_{lb}}(g'_1, g'_2)$ is defined as:
\[{\sf f_{lb}}(g'_1, g'_2)=
  \begin{cases}
    {\sf lb_L}(g'_1, g'_2), & \text{\phantom{****} if } {\sf lb_L}(g'_1, g'_2) > \tau'\phantom{---}\\
    {\sf lb_C}(g'_1, g'_2), & \text{else if } {\sf lb_C}(g'_1, g'_2) > \tau'\\
    {\sf lb_P}(g'_1, g'_2), & \text{otherwise,}
  \end{cases}
\]
where $\tau' =  \tau - ({\sf ec}(m) + \mathcal{B}(m))$.\\[-0.6em]

Interestingly, many partial mappings in the search tree have the same unmapped
subgraphs. For the graphs in Figure~\ref{fig:mapping}, for example, $m = \{u_1
\smapsto v_1, u_2 \smapsto v_2\}$ and $m' = \{u_1 \smapsto v_2, u_2 \smapsto
v_1\}$ have the same unmapped subgraphs. Therefore, we can compute ${\sf
  f_{lb}}(g'_1, g'_2)$ once and share the result in those partial mappings having
the same unmapped subgraphs.
Lemma~\ref{lm:unmappedpart} states the
number of partial mappings having the same unmapped subgraphs.

\begin{lmma}
\label{lm:unmappedpart}
Given a partial mapping $m$ between two graphs $g_1$ and $g_2$, there are $|m|!
/ n_{\varepsilon}!$ partial mappings in the search tree that have the same
unmapped subgraphs, where $n_{\varepsilon}$ is the number of copies of
$\varepsilon$ in $V(g_1|_{m})$.
\end{lmma}
\begin{proof}
  Given two graphs for GED computation, any graph can be $g_2$ by the symmetry
  in Lemma~\ref{lm:metric}. Therefore, we assume, without loss of generality,
  there is no $\varepsilon$ in $V(g_2)$.  Consider a partial mapping $m'$
  between $g_1$ and $g_2$ such that $V(g_1|_{m}) = V(g_1|_{m'})$. By the
  definition of the unmapped subgraph (see Section~\ref{sec:gedcomputation}), $m$
  and $m'$ have the same unmapped subgraph of $g_1$.
  Recall that we use a specific ordering of $g_2$ for all mappings.
  Therefore, $m'$ and $m$ also have the same vertex set for $g_2$, and the same
  unmapped subgraph of $g_2$. There are $|m|! / n_{\varepsilon}!$ distinct
  permutations of $V(g_1|_{m})$, and thus there are $|m|! / n_{\varepsilon}!$
  mappings in the search tree that have the same unmapped subgraphs.
\end{proof}

\subsection{GED Computation Algorithm}
\label{sec:GEDalgorithm}

Given a partial mapping $m$, we compute ${\sf lb_M}(m)$ as shown in
Algorithm~\ref{alg:lowerbound}. If $m$ survives from ${\sf ec}(m)$ and
$\mathcal{B}(m)$ in our filtering pipeline (Lines~1--4), we look up a hash with
$V(g_1|_{m})$ to share the computation result of ${\sf f_{lb}}(g'_1, g'_2)$, if
any, based on Lemma~\ref{lm:unmappedpart} (Line~5). We retrieve a pointer to
the hash entry, where the hash entry $e$ has $e.lb = {\sf f_{lb}}(g'_1, g'_2)$
and an index $e.index$ for the lower bound function used in computing ${\sf
  f_{lb}}(g'_1, g'_2)$ (i.e., one of ${\sf lb_L}$, ${\sf lb_C}$, and ${\sf
  lb_P}$).
If the lookup fails, the hash makes a new entry $e$ such that $e.lb = 0$ and
$e.index = 0$, and return the pointer to the entry.
If $m$ also survives from the lower bound from the hash (Lines~5--6) and not
all lower bound functions are applied (Line~9), we apply unused lower bound
functions to $m$ (Lines~9--14), and update the hash entry if necessary (Line~10
and Line~12).

\begin{algorithm}[t]
\SetKwInOut{input}{input}
\SetKwInOut{output}{output}
\input{$m$ is a mapping and $\tau$ is a GED threshold}
\output{a lower bound of $m$}
\BlankLine
  $dist \leftarrow {\sf ec}(m)$\;
  \lIf{$dist > \tau$}{\Return $dist$}
  \BlankLine
  
  $dist \leftarrow dist + \mathcal{B}(m)$\;
  \lIf{$dist > \tau$}{\Return $dist$}
  \BlankLine

  $e \leftarrow hash.$lookup($V(g_1|_{m})$)\;
  \lIf{$dist + e.lb > \tau$}{\Return $dist + e.lb$}
  \BlankLine

  ${\sf f_{lb}} \leftarrow [{\sf lb_L}, {\sf lb_C}, {\sf lb_P}]$\;
  $i \leftarrow e.index + 1$\;
  \While{$i \leq |{\sf f_{lb}}|$}{
    $e.index \leftarrow i$\;
    \If{$e.lb < {\sf f_{lb}}[i](g'_1, g'_2)$}{
      $e.lb \leftarrow {\sf f_{lb}}[i](g'_1, g'_2)$\;
      \lIf{$dist + e.lb > \tau$}{\Return $dist + e.lb$}}
    $i \leftarrow i + 1$\;}
  \BlankLine

  \Return $dist + e.lb$\;
\caption{${\sf lb_M}(m, \tau)$}
\label{alg:lowerbound}
\end{algorithm}  

\begin{algorithm}[htbp]
\SetKwInOut{input}{input}
\SetKwInOut{output}{output}
\input{$g_1$ and $g_2$ are graphs, and $\tau$ is a GED threshold}
\output{${\sf NassGED}(g_1, g_2)$}
\BlankLine

$queue \leftarrow \emptyset$; $m_r \leftarrow \emptyset$\;
\lIf{${\sf lb_M}(m_r, \tau) \leq \tau$}{ $queue$.push($m_r$)}
\BlankLine

\While{queue $\neq \emptyset$}{
  $m \leftarrow$  $queue$.pop()\;
  \lIf{$|m| = |V(g_2)|$}{ \Return ${\sf lb_M}(m, \tau)$ }

  \ForEach{child node $m_c$ of $m$}{
    \lIf{${\sf lb_M}(m_c, \tau) \leq \tau$}{ $queue$.push($m_c$)}}}
   
 \Return $\tau + 1$\;
\caption{\textsf{NassGED}($g_1$, $g_2$, $\tau$)}
\label{alg:ged}
\end{algorithm}

Algorithm~\ref{alg:ged} encapsulates our GED computation algorithm.  It first
tries to prune the root of the search tree by computing the lower bound of the
root node (Line~2). It is worth to remind that {\sf NassGED} encompasses the
refinement step of {\sf Inves} by applying ${\sf lb_P}$ in
Algorithm~\ref{alg:lowerbound} to the root node.
If it fails to prune the root node, it pushes the root node into the priority
queue, $queue$. Then, it repeatedly expands or prunes the search tree by
investigating currently active tree nodes, which are contained in the queue, as
follows (the {\bf while} loop in Lines~3--7). The algorithm pops a mapping $m$
from the queue that has a minimum lower bound (Line~4). If $m$ is a full 
mapping (i.e., a mapping having all vertices in $g_1$ and $g_2$), it returns
${\sf lb_M}(m)$, which is equal to ${\sf ec}(m)$ since $m$ is a full mapping.
(Line~5). Otherwise, it expands the search tree using each child mapping of the
popped mapping based on the lower bound of the child mapping (Line~7). The
algorithm returns $\tau+1$ if it prunes all possible subtrees of the search
tree (Line~8).

\vspace*{0.7em}
\noindent
{\bf Correctness of Algorithm~\ref{alg:ged}.}  ${\sf lb_M}$ in
Algorithm~\ref{alg:lowerbound} correctly returns a lower bound because (1) each
lower bound function correctly calculates a lower bound~\cite{MIXED, INVES} and
(2) the hash returns a correct lower bound (by
Lemma~\ref{lm:unmappedpart}). Algorithm~\ref{alg:ged} pushes every node of the
search tree whose lower bound is not greater than $\tau$ (Line~7). It returns
if either it finds a full mapping (Line~5) or the queue is empty (Line~8).
Since it pops a mapping having the lowest lower bound from the queue, if the
mapping popped from the queue is a full mapping, it is guaranteed that the
mapping has a minimum edit cost. If the queue is empty, every partial mapping
is pruned since the lower bound is greater than $\tau$, and thus the algorithm
returns $\tau + 1$ to indicate ${\sf NassGED}(g_1, g_2) > \tau$ (Line~8).

\section{Implementation}
\label{sec:impl}

\subsection{Indexing}
\label{sec:index}

Given a graph database $\mathcal{D}$, we need to pre-compute and materialize
the GED between every pair of graphs in $\mathcal{D}$ to obtain $\mathcal{R}(g,
\tau)$ for any graph $g \in \mathcal{D}$ with any distance threshold
$\tau$. However, it is impractical to build such an index. Instead, we assume a
pre-defined maximum threshold $\tau_{max}$. By Lemma~\ref{lm:candgen}, it is
sufficient to compute $\mathcal{R}(g, 2\tau_{max})$ for each graph $g \in
\mathcal{D}$. We use $\tau_{index}$ to denote the maximum GED threshold for
indexing, i.e., $\tau_{index} = 2\tau_{max}$.

\begin{algorithm}[htbp]
\SetKw{KwTo}{to}
\SetKw{KwSync}{sync}
\SetKwInOut{input}{input}
\SetKwInOut{output}{output}
\SetKwProg{Spawn}{spawn}{:}{}
\SetKwFor{RepTimes}{repeat}{times}{}
\BlankLine
\input{$\mathcal{D}$ is a graph database,
  $\tau_{index}$ is a threshold for indexing, $n$ is the number of threads.}
\output{Index $\mathcal{I}$}
$\mathcal{I} \leftarrow$ an array of $|\mathcal{D}|$ empty lists\;
\RepTimes{n}{  
\Spawn{}{
  $i \leftarrow {\sf next\_graph\_id()}$; \tcp*[f]{\tt synchronous access}\\
  \lIf{$i \geq |\mathcal{D}|$}{\Return}
  \BlankLine
  $\mathcal{I}[g_i]$.{\sf append}$(g_i, 0)$\;
  \For{$j \leftarrow i+1$ \KwTo $|\mathcal{D}|$}{
    $\delta \leftarrow {\sf NassGED}(g_i, g_j, \tau_{index})$\;
  
    $\mathcal{I}[g_i]$.{\sf append}$(g_j, \delta)$\;
    $\mathcal{I}[g_j]$.{\sf append}$(g_i, \delta)$\;
  }
}
}
\KwSync\;
\Return $\mathcal{I}$\;
\caption{\sf \textsf{NassIndex}($\mathcal{D}$, $\tau_{index}, n$)}
\label{alg:nassindex}
\end{algorithm}

Since we need $|\mathcal{D}|$ independent similarity searches to build an
index, we implement a straightforward multi-threading to reduce index building
time. Our implementation is to spread each data graph to a different thread,
and perform similarity searches simultaneously\footnote{There can be
  alternative implementations, e.g., improving GED computation using
  multi-threads or improving a similarity search by spreading candidates to
  different threads.  However, parallel graph search is out of the scope of the
  paper and we will leave this as future work.}. Algorithm~\ref{alg:nassindex}
shows our indexing algorithm. After initializing the index (Line~1), it spawns
$n$ threads (the loop in Line~2). Each thread synchronously gets a graph id $i$
(Line~4), which used to indicate the $i^{th}$ graph $g_i$ in $\mathcal{D}$, and
computes the GED between $g_i$ and $g_j$ for $j > i$ (Lines~7--8). Then,
it updates the index entries $\mathcal{I}[g_i]$ and $\mathcal{I}[g_j]$ with the
GED (Lines~9--10). 

One problem in indexing is that GED computation with $2\tau_{max}$ can be too
costly to be practical. We solve the problems by restricting a maximum
threshold for an index to $\tau_{index} = \tau_{max} + c$, where $c$ is a
constant less than $\tau_{max}$. For a query graph $q$, if we find a data graph
$g$ such that ${\sf ged}(q, g) \leq c$, by Lemma~\ref{lm:candgen}, we can
(re)generate candidate graphs using the index for all possible thresholds $1
\leq \tau \leq \tau_{max}$. In Table~\ref{tbl:expr-intro-cand}, for example,
the average number of results is greater than 1 when $\tau = 3$. By using $c =
3$ for this dataset, therefore, we can expect almost all queries can take
advantage of our index.

Another problem is that GED computation of a certain pair of graphs can be
intractable even with a reasonably large threshold. We solve the problem by
allowing an inexact index entry having a GED lower bound for such a pair of
graphs. To this end, we assume that the time consumption is proportional to the
memory consumption in computing a GED, and we maintain a thread that monitors
the real memory consumption of the indexing process. If the memory consumption
reaches a pre-defined limit, we select a victim thread that has the largest
queue size of {\sf NassGED}. The victim thread immediately returns the minimum
lower bound among the lower bounds of queued nodes.

\begin{algorithm}[htbp]
\SetKwInOut{input}{input}
\SetKwInOut{output}{output}
\BlankLine
  \If(\tcp*[f]{regenerate a candidate set}){$\tau + \delta \leq \tau_{index}$}{
    $\mathcal{A} \leftarrow \{r\ |\ (r, d) \in \mathcal{I}[g]\ \land\ d \leq \tau - \delta\ \land\ d {\tt\ is\ exact} \}$\;
    $\mathcal{R}_g \leftarrow \{r\ |\ (r, d) \in \mathcal{I}[g]\ \land\ d \leq \tau + \delta\}$\;
  
    $\mathcal{C}' \leftarrow (\mathcal{C} - \mathcal{V}) \cap
    (\mathcal{R}_g - \mathcal{A})$\;}
  \Else(\tcp*[f]{keep verifying the current candidate set}){
    $\mathcal{A} \leftarrow \{g\}$\;
    $\mathcal{C}' \leftarrow (\mathcal{C} - \mathcal{V})$\;}
 \caption{\sf Replacement of Lines~7--8 of Algorithm~\ref{alg:nass}}
\label{alg:nass-modified}
\end{algorithm}

To use our index, we modify Lines~7--8 of Algorithm~\ref{alg:nass} as shown in
Algorithm~\ref{alg:nass-modified}. According to Lemma~\ref{lm:candgen}, we
regenerate candidates only when $\delta + \tau \leq \tau_{index}$ (Line~1).
For $\mathcal{R}(g, \tau - \delta)$ in Line~7 of Algorithm~\ref{alg:nass}, we
include only those graphs having exact GEDs (Line~2). For $\mathcal{R}(g, \tau
+ \delta)$ in Line~8 of Algorithm~\ref{alg:nass}, which is used to regenerate
candidates, we use an approximate set that includes inexact GEDs
($\mathcal{R}_g$ in Line~3).

\vspace*{0.7em}
\noindent
{\bf Correctness of Algorithm~\ref{alg:nass-modified}.} Since an inexact GED is
a GED lower bound (i.e., $\mathcal{R}(g, \tau+\delta) \subseteq \mathcal{R}_g$)
and uncollected results in Line~2 are included in $\mathcal{C}'$ (Line 4),
$\mathcal{C}'$ contains all remaining results. Therefore, the algorithm does
not miss any result in spite of inexact index entries.
If the algorithm cannot use the index (Line~5), it continues to verify the
current the candidate set (Line~7). Therefore, a restricted $\tau_{index}$ does
not affect the result set of the query.

\subsection{GED Computation}
\label{sec:gedimpl}

As discussed in {\sf Inves}~\cite{INVES}, a proper vertex ordering of $g_2$ is
crucial to the performance of GED computation. In this paper, we abide by the
vertex ordering of ${\sf Inves}$. Because we apply its partitioning technique
to the root node of the search tree, we can immediately use the vertex ordering
obtained from the partitioning of $g_2$ (refer to \cite{INVES} for the
details).

We use a balanced binary search tree to implement the hash for storing ${\sf
  f_{lb}}(g_1 \backslash g_1|_m, g_2 \backslash g_2|_m)$. It can be easily seen
that the time complexity for the hash is exactly the same with that for the
priority queue used in the GED computation algorithm. A bitmap, which is used
as the key for the hash, is created for each mapping $m$ to represent
$V(g_1|_m)$ as follows. The bitmap has $|V(g_1)|$ bits, and the $i^{th}$ bit is
1 if the $i^{th}$ vertex of $g_1$ is included in $V(g_1|_m)$, otherwise the
$i^{th}$ bit is 0.
Apparently, the bitmap of $m$ is incrementally created using the bitmap of the
parent of $m$ (i.e., the mapping in the parent node of $m$ in the search tree)
by setting one bit for the vertex newly inserted to $g_1|_{m}$. Since we focus
on small and medium sized graphs, one or two 64-bit integers are sufficient for
a bitmap in most cases.

The lower bound of a mapping can be incrementally computed using its parent
mapping. Consider a partial mapping $m_p$ and its child $m_c = m_p \cup \{u'
\smapsto v'\}$. We compute lower bounds as follows\footnote{We note that
  $\lambda$, which is introduced in Section~2, is not a label. It is used for
  indicating the absence of an edge or a vertex, and thus, we do not include
  $\lambda$ in any label multiset discussed in this paper.}.

\vspace*{0.2cm}
\noindent
{\bf Bridge cost.} The label multisets of bridges of $u'$ and $v'$ are constructed from the scratch.
The label multisets of bridges of other vertices are updated as:    
\vspace*{-0.1cm}
  \[\forall u \in V(g_1|_{m_p})\ \ L_{br}^{m_c}(u) \leftarrow L_{br}^{m_p}(u) - \{l(u, u')\},\]
\vspace*{-0.4cm}
  \[\forall v \in V(g_2|_{m_p})\ \ L_{br}^{m_c}(v) \leftarrow L_{br}^{m_p}(v) - \{l(v, v')\}.\]
Then, $\mathcal{B}(m_c)$ is computed using the label multisets of bridges constructed for $m_c$.

\vspace*{0.2cm}
\noindent  
{\bf Label-based lower bound.} Let $g'_1$, $g'_2$, $g''_1$, and $g''_2$ denote
$g_1 \backslash g_1|_{m_p}$, $g_2 \backslash g_2|_{m_p}$, $g_1 \backslash
g_1|_{m_c}$, and $g_2 \backslash g_2|_{m_c}$, respectively.
The label multisets of unmapped subgraphs are incrementally constructed as
follows.
\vspace*{-0.1cm}
\[L_V(g''_1) \leftarrow L_V(g'_1) - \{ l(u')\},\ L_V(g''_2) \leftarrow L_V(g'_2) - \{ l(v')\}, \phantom{.}\]
\vspace*{-0.4cm}
\[L_E(g''_1) \leftarrow L_E(g'_1) - \{l(u, u')\ |\ u \in V(g'_1) \land (u, u') \in E(g'_1)\},\]
\vspace*{-0.4cm}
\[L_E(g''_2) \leftarrow L_E(g'_2) - \{l(v, v')\ |\ v \in V(g'_2) \land (v, v') \in E(g'_2)\}.\]
Then, ${\sf lb_L}$ is computed using the label multisets of $g''_1$ and
$g''_2$.

\vspace*{0.2cm}
\noindent  
{\bf Compact branch-based lower bound.} We remove $b_{u'}$ and $b_{v'}$ from
$B(g'_1)$ and $B(g'_2)$, respectively, and compute ${\sf lb_C}$ again. As we
mentioned earlier, we use $O(n\log n)$ algorithm for finding a minimum weighted
bipartite matching between $B(g'_1)$ and $B(g'_2)$, where $n = |B(g'_1)|$
(refer to \cite{MIXED} for the details of the algorithm). The algorithm
basically merges $B(g'_1)$ and $B(g'_2)$ after sorting the branch
sets. Therefore, the time complexity of ${\sf lb_C}$ is dominated by the cost
for sorting $B(g'_1)$ and $B(g'_1)$. As we remove $b_{u'}$ and $b_{v'}$ from
the already sorted branch sets, our incremental implementation requires $O(n)$,
which is the cost for merging the sets.

\vspace*{0.2cm}
\noindent  
{\bf Partition-based lower bound.}  Unlike other lower bounds, it is not
straightforward to incrementally compute the lower bound ${\sf lb_P}$ from
parent's ${\sf lb_p}$, and thus we do not use parent's ${\sf lb_p}$. Instead,
we take a different approach to save computation for ${\sf lb_p}$. Consider
$\tau'_1 = \tau - ({\sf ec}(m_1) + \mathcal{B}(m_1))$ for a partial mapping
$m_1$. As the partitioning technique of {\sf Inves} incrementally increases the
lower bound, we can save computation by stopping partitioning as soon as ${\sf
  lb_P} = \tau'_1 + 1$. Consider $\tau'_2 = \tau - ({\sf ec}(m_2) +
\mathcal{B}(m_2))$ for another partial mapping $m_2$ such that the unmapped
subgraphs of $m_1$ and $m_2$ are the same. If $\tau'_2 > \tau'_1$, we cannot
prune $m_2$ using the ${\sf lb_p}$ of $m_1$ stored in the hash. Since ${\sf
  lb_p}$ may not be tight, we can compute ${\sf lb_p}$ again for $m_2$. If the
stored ${\sf lb_p}$ is not tight, we can resume partitioning to tighten ${\sf
  lb_p}$ instead of computing ${\sf lb_p}$ from the scratch. It is
straightforward to stop and resume partitioning and we omit the details (refer
to \cite{INVES} for the details of the partitioning technique of {\sf Inves}).

\vspace*{0.2cm}
To save the memory, instead of keeping bridge multisets, label multisets, and
branch multisets in each mapping, we compute multisets for the parent $m_p$
popped from the queue, and use them to incrementally compute label multisets of
each child of $m_p$.

\section{Experiments}
\label{sec:expr}

\subsection{Experimental Setting}
\label{sec:expr-setting}

We conducted experiments on two widely used datasets, AIDS and PubChem. AIDS is
an antiviral screen compound data set published by NCI/NIH\footnote{{\sf
    https://cactus.nci.nih.gov/download/nci/AIDS2DA99.sdz}}. It is a popular
benchmark used in most graph search techniques. PubChem is a chemical compound
dataset\footnote{{\sf https://pubchem.ncbi.nlm.nih.gov,
    Compound\_000975001\_001000000.sdf}}. It is a subset of chemical compounds
published by the PubChem Project. Graphs in the PubChem dataset contain
repeating substructures and have less size and label variations compared with
the AIDS dataset. Table~\ref{tbl:stats} shows statistics of the datasets. In
the table, $|\mathcal{D}|$ is the number of graphs in each dataset,
$|V|_{\text{avg}}$ and $|E|_{\text{avg}}$ is the average numbers of vertices
and edges, $\sigma_{|V|}$ and $\sigma_{|E|}$ are the standard deviations of the
numbers of vertices and edges, and $n_{vl}$ and $n_{el}$ are the numbers of
distinct vertex and edge labels.

\newcolumntype{R}[1]{>{\raggedleft\arraybackslash}p{#1}}

\begin{table}[htbp]
  \centering
  \caption{Statistics of datasets}
  \begin{tabular}{l|R{0.8cm}|R{0.7cm}|R{0.7cm}|R{0.45cm}|R{0.45cm}|R{0.37cm}|R{0.37cm}} \thickhline
    Dataset& $|\mathcal{D}|$ & $|V|_{\text{avg}}$ & $|E|_{\text{avg}}$ & $\sigma_{|V|}$ & $\sigma_{|E|}$ & $n_{vl}$ & $n_{el}$\\ \hline\hline
    AIDS & 42,689 & 25.60 & 27.60 & 12.2 & 13.3 & 62 & 3\\
    PubChem & 22,794 & 48.11 & 50.56 & 9.4 & 9.9 & 10 & 3\\ \thickhline
  \end{tabular}
  \label{tbl:stats}
\end{table}

\noindent
We also used synthetic datasets to evaluate the scalability of {\sf Nass} (see
Section~\ref{sec:scale} for details).

We randomly sampled 100 query graphs from each dataset. If we find a data graph
which is the same as a query graph, our index can immediately find all results of the
query by Lemma~\ref{lm:candgen} and Lemma~\ref{lm:regen}. Thus, we removed the
query graphs from the dataset not to exaggerate the performance gain of {\sf
  Nass}. Aggregated results of 100 queries are reported in the experiments.
We note that y-axis is log-scaled in all experiments.
For experiments on the AIDS and PubChem datasets, we set the maximum threshold
$\tau_{max} = 6$.

We implemented {\sf Nass} in C++, and compiled it using GCC with the -{\sf O3}
flag\footnote{The source code of {\sf Nass} is available at {\sf
    https://github.com/JongikKim/Nass}.}. We compared {\sf Nass} with two
representative indexing techniques: {\sf Pars}~\cite{PARS, PARS2}, and {\sf
  MLIndex}~\cite{MLINDEX}, and two state-of-the art GED verification
techniques: {\sf Inves}\footnote{The source code of {\sf Inves} is obtained
  from {\sf https://github.com/JongikKim/Inves}.}~\cite{INVES} and {\sf
  CSI\_GED}\footnote{The binary code of {\sf CSI\_GED} is obtained from the
  authors.}~\cite{CSIGED, CSIGED2}. Since the indexing techniques, {\sf Pars}
and {\sf MLIndex}, mainly rely on the out-dated {\sf A*-GED} for verification,
we used {\sf Inves} in the verification phase of them, similar to~\cite{INVES}.
\COMMENTOUT{ We use ${\sf N}_k$ and ${\sf Nass}_k$ to denote {\sf Nass} with an
  index whose $\tau_{index}$ is $k$. We also use ${\sf N}_k$ to denote the
  index itself if clear from the context.  } All experiments were conducted on
a machine with 32GB RAM, and an Intel core i7, running a 64-bit Ubuntu OS. Data
graphs and indices were kept in memory.

\subsection{Experiments on Index}
We constructed indices by varying the limit of memory consumption for building
an index from 1GB to 8GB.  On the AIDS dataset, we used $\tau_{index} =
\tau_{max} + 3$ based on the observation that the average number of results of
queries is 1.26 when $\tau = 3$. As described in Section~\ref{sec:index}, almost
all queries can take advantage of an index built with the $\tau_{index}$. On the
PubChem dataset, we used $\tau_{index} = \tau_{max} + 1$ since the average
number of results of queries is 0.94 when $\tau = 1$. We used 8 threads to
construct an index.

Experimental results for indexing is shown in Table~\ref{tbl:index}. In the
table, {\sf T}, {\sf I}, and {\sf N} denote the index construction time, the
percentage of inexact entries, and the number of indexed entries, respectively.

\begin{table}[h!]
  \centering
  \caption{Experimental results on indexing}
  \small
  \begin{tabular}{c|l|r|r|r|r} \thickhline
    &  \multirow{2}{*}{Dataset} & \multicolumn{4}{c}{Memory limit for indexing}\\\cline{3-6}
    &         & 1GB & 2GB & 4GB & 8GB \\ \hline\hline
    \multirow{2}{*}{\sf T}
              & AIDS     & 190379s & 192211s & 193319s & 193267s \\
              & PubChem  & 83520s & 114323s  & 152475s & 192978s \\ \thickhline
    \multirow{2}{*}{\sf I}
              & AIDS     &  0.0033\% & 0.0011\% & 0.0006\% & 0.0002\%\\
              & PubChem  &  5.18\% & 2.74\% & 1.72\% & 0.85\% \\ \thickhline
    \multirow{2}{*}{\sf N}
              & AIDS     & 4220628 & 4220658 & 4220606 & 4220588 \\
              & PubChem  & 105414 & 103596 & 102992 & 102302 \\ \thickhline
  \end{tabular}
  \label{tbl:index}
\end{table}

A similarity search on AIDS and PubChem datasets typically used about 300MB
memory in our implementation. However, the search space for GED computation of
a certain pair of graphs sharply increased, and required a tremendous amount of
memory. By limiting the memory consumption, therefore, we restricted the time
for computing GED between such a pair of graphs.
Although it takes much time to construct an index, we remark that an index
is pre-built offline and many online queries with different thresholds can take
advantage of the index (c.f., {\sf Pars}~\cite{PARS, PARS2} also spends more
than $10^5$ seconds to build an index for the AIDS dataset).
On the AIDS dataset, the percentage of inexact index entries was negligibly
small. On the PubChem dataset, it was from 1\% to 5\% only.
The index size can be measured by the number of indexed entries, where
each entry contains a graph id with a GED between 0 and $\tau_{index}$.  If we
use a compact representation, an entry requires $\lceil \log_2 |\mathcal{D}|
\rceil + \lceil \log_2 (\tau_{index}+1) \rceil + 1$ bits, where 1 bit is used
to indicate the exactness of the GED. The size of an index can be calculated by
multiplying the number of indexed entries by the size of an entry.
For example, the PubChem index with 4GB limit requires $102992 \times
\lceil(\lceil \log_2 22774 \rceil + \lceil \log_2 8 \rceil + 1)/8\rceil\ {\rm bytes}
\approx 309 {\rm KB}$. Similarly, the AIDS index with 4GB limits requires about 12MB.

\begin{figure}[htbp]
  \centering
  \subfigure[Query time (AIDS)]{\includegraphics[height=2.2cm]{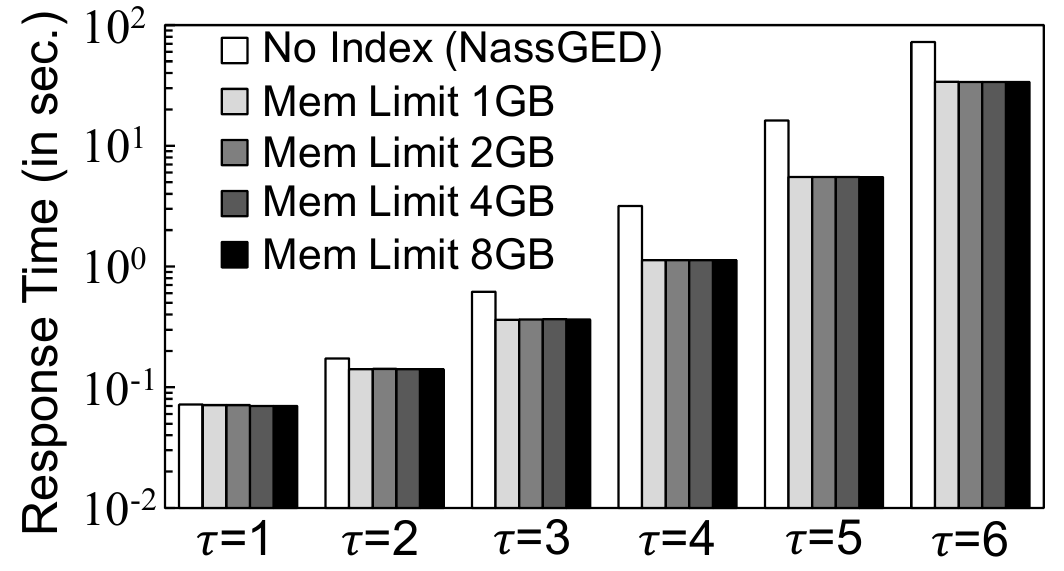}}
  \subfigure[Query time (PubChem)]{\includegraphics[height=2.2cm]{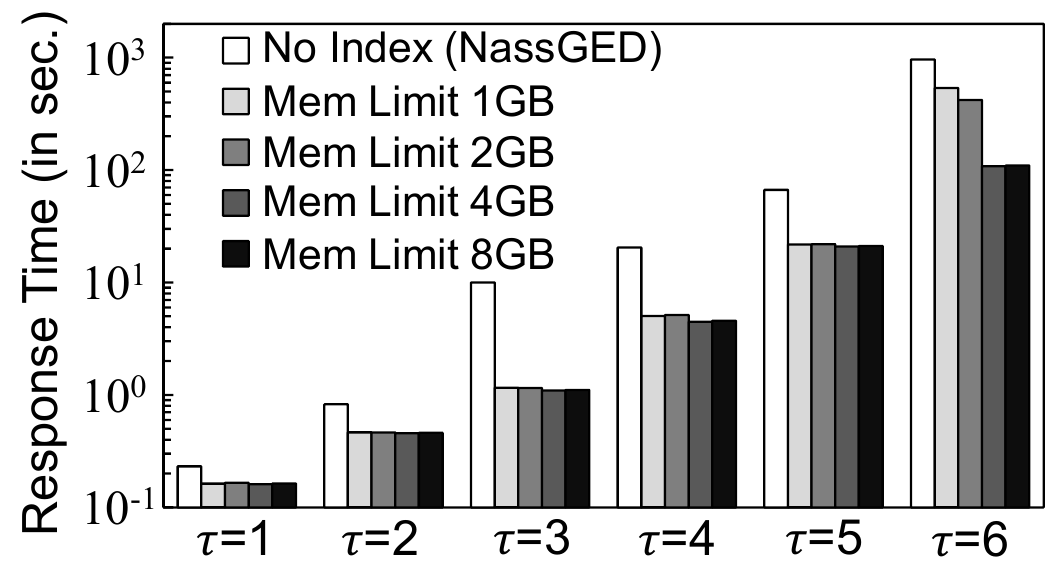}}
  \caption{Query response time for difference indices}
  \label{fig:expr-index}
\end{figure}

Figure~\ref{fig:expr-index} shows the query response time for indices
constructed with different memory limits. In the figure, {\sf No Index} denotes
{\sf Nass} without an index (i.e., each graph was directly verified through
{\sf NassGED}). On the AIDS dataset, a similarity search with an index was up
to 3 times faster than that without an index as shown in
Figure~\ref{fig:expr-index}(a). On the PubChem dataset, similarly, an indexed
search was up to 9 times faster than a search without an index as shown in
Figure~\ref{fig:expr-index}(b).
Based on the experiments, we chose the indices with the 4GB memory limit for
both datasets. The reported results in the following section are based on the
indices.

\subsection{Evaluating Graph Similarity Search}
\label{sec:expr-simsearch}

Figure~\ref{fig:expr-search} shows the query evaluation results of different
search algorithms. In the figure, we use {\sf P}, {\sf M}, and {\sf N} to
denote {\sf Pars}+{\sf Inves}, {\sf MLIndex}+{\sf Inves}, and {\sf Nass}
respectively. {\sf Nass} consistently outperformed all existing algorithms for
all thresholds as shown in Figure~\ref{fig:expr-search}(a) and (b).
${\sf Nass}$ was from 4 to 13 times faster than existing algorithms on the AIDS
dataset (Figure~\ref{fig:expr-search}(a)), and from 4 to 60 times faster on the
PubChem dataset (Figure~\ref{fig:expr-search}(b)).

\begin{figure}[htbp]
  \centering
  \subfigure[Query time (AIDS)]{\includegraphics[height=2.36cm]{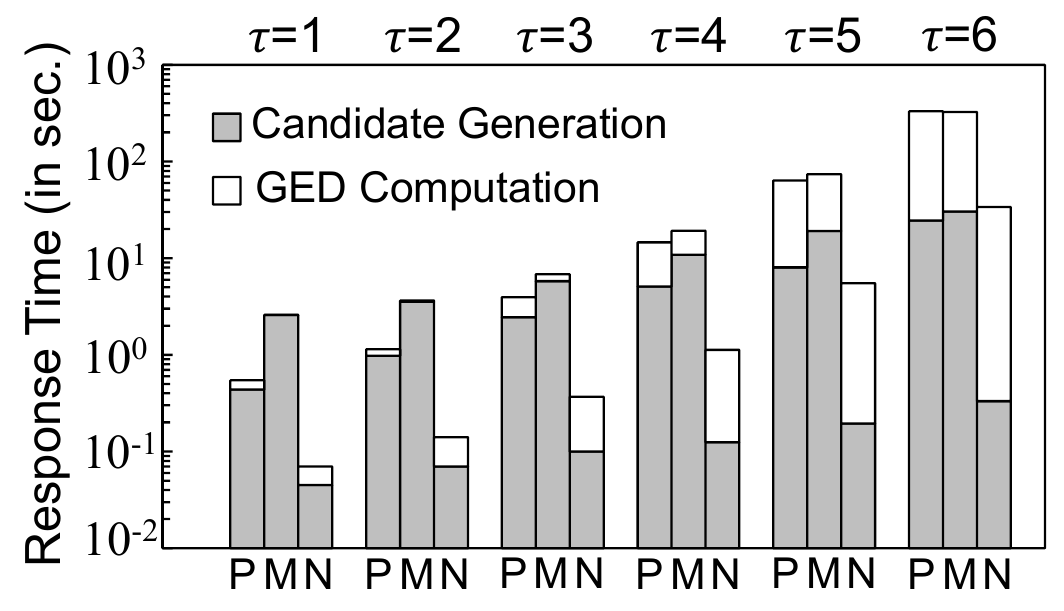}}
  \subfigure[Query time (PubChem)]{\includegraphics[height=2.36cm]{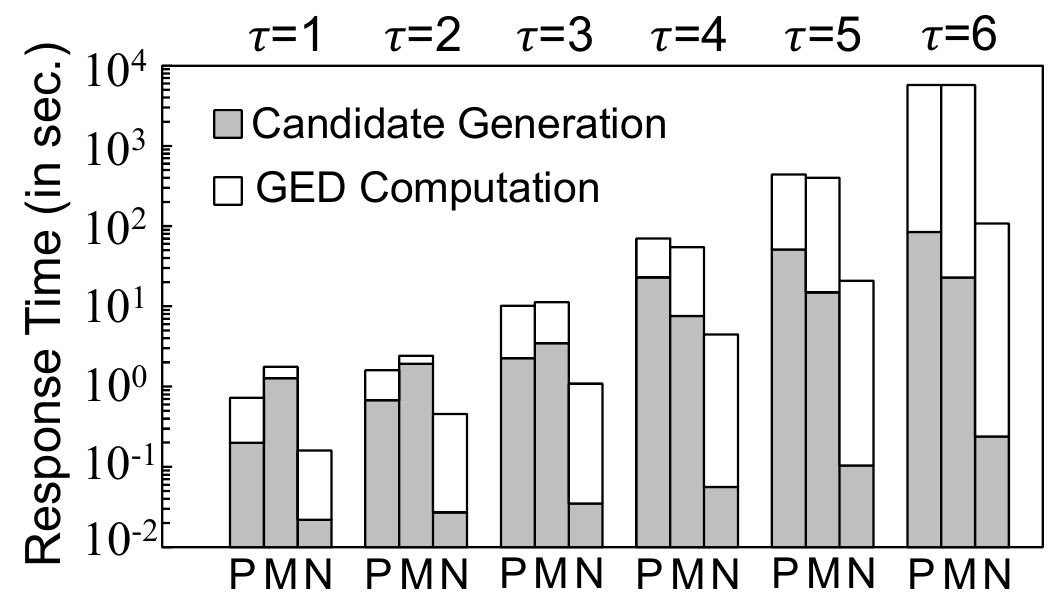}}
  \subfigure[Candidates (AIDS)]{\includegraphics[height=2.2cm]{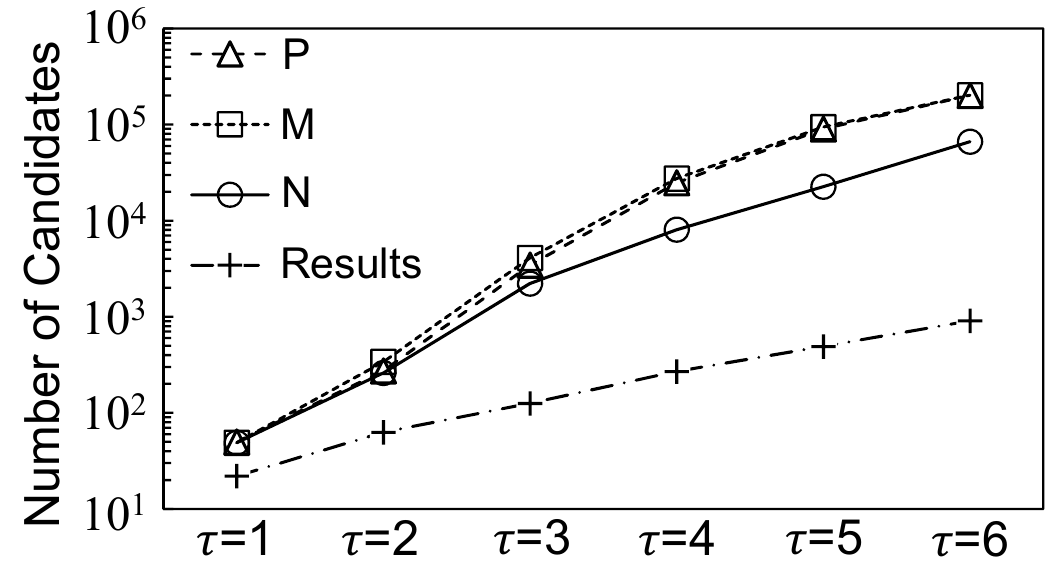}}
  \subfigure[Candidates (PubChem)]{\includegraphics[height=2.2cm]{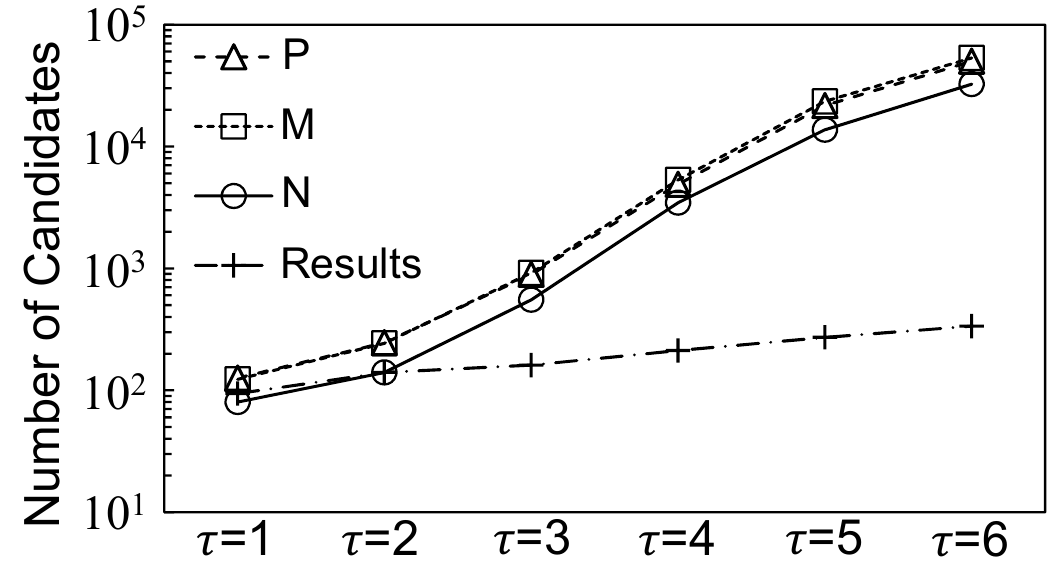}}
  \subfigure[Mappings (AIDS)]{\includegraphics[height=2.2cm]{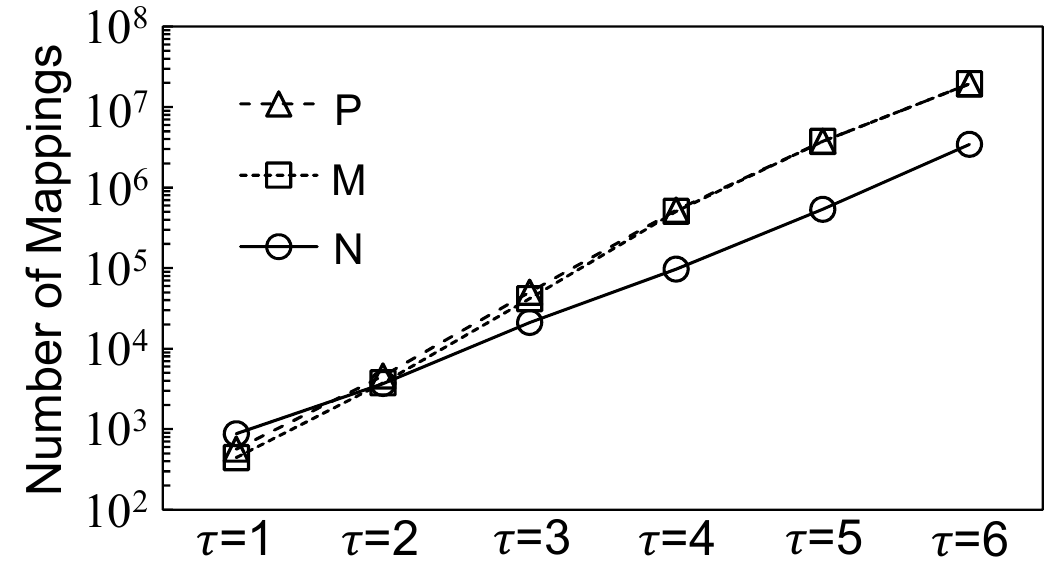}}
  \subfigure[Mappings (PubChem)]{\includegraphics[height=2.2cm]{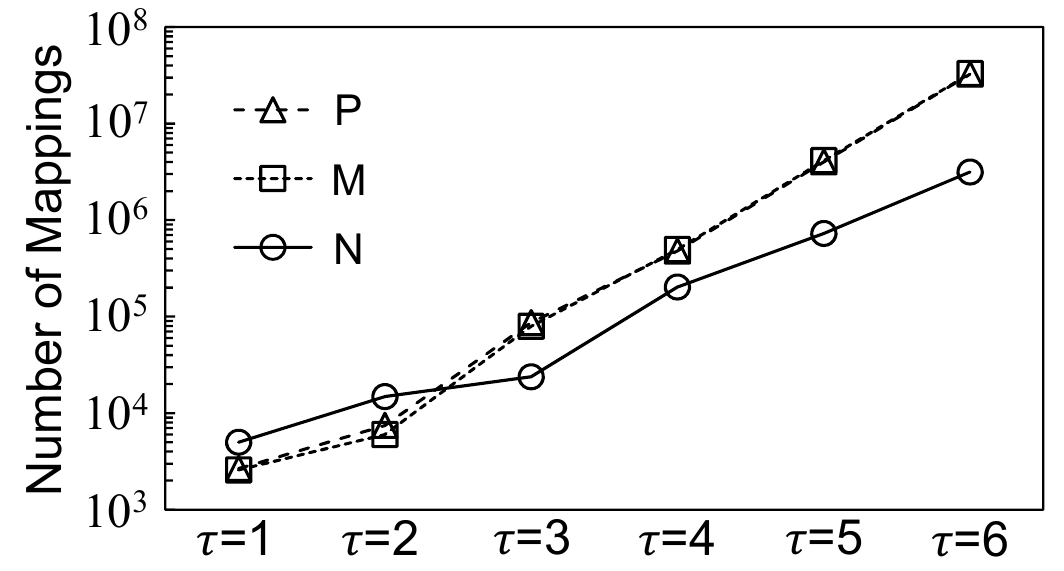}}
  \caption{Comparing {\sf Nass} with existing search techniques}
  \label{fig:expr-search}
\end{figure}

The improvement of {\sf Nass} can be explained by the number of
candidates\footnote{The number of candidates of {\sf Nass} was counted using
  those candidates that survived from the filtering pipeline in the root node
  of the search tree (see Section~\ref{sec:gedmodel} and
  Section~\ref{sec:GEDalgorithm} for the reason).} that require GED computation
in Figure~\ref{fig:expr-search}(c) and (d), and the number of mappings pushed
into the queue while GED computation in Figure~\ref{fig:expr-search}(e) and
(f). The number of candidates generated by {\sf Nass} was up to 4 times smaller
than that of existing candidates. Interestingly, on the PubChem dataset, the
number of candidates generated by {\sf Nass} was fewer than the number of
result graphs when $\tau = 1$ (Figure~\ref{fig:expr-search}(d)). This is
because {\sf Nass} can identify some result graphs without verification.
Since our index significantly reduced the number
of candidates and our GED algorithm effectively prunes the search tree, the
number of mappings pushed into the queue was dramatically reduced as depicted
in Figure~\ref{fig:expr-search}(e) and (f). When $\tau = 5$ on the AIDS
dataset, for example, the total number of mappings pushed into the queue by
${\sf Nass}$ was about 10 times smaller than that of existing techniques.
For a low threshold, however, the number of mappings of {\sf Nass} was slightly
greater than that of existing techniques (e.g. $\tau = 1$ on the AIDS
dataset). This is because of the {\it partial GED} function of the {\sf Inves}
verification technique (see {\sf Inves}\cite{INVES} for the details). We cannot
apply the partial GED function in our GED computation algorithm, because {\sf
  Nass} requires an exact distance for a result graph returned by {\sf NassGED}
but the partial GED function returns an inexact distance for a
result. Nonetheless, {\sf Nass} was much faster than the existing techniques on
low thresholds, because the candidate generation of {\sf Nass} was extremely
efficient compared with that of existing indexing techniques.

\subsection{Evaluating GED Verification}
\label{sec:expr-ged}
We compared our GED computation algorithm, denoted by {\sf NassGED}, with {\sf
  Inves} and {\sf CSI\_GED}. We evaluated the performance of GED computation as
follows. For each query, we first applied the label filtering to every graph in
a dataset, and then directly verified each graph that passed the label filter.

\begin{figure}[htbp]
  \centering
  \subfigure[Query time (AIDS)]{\includegraphics[height=2.2cm]{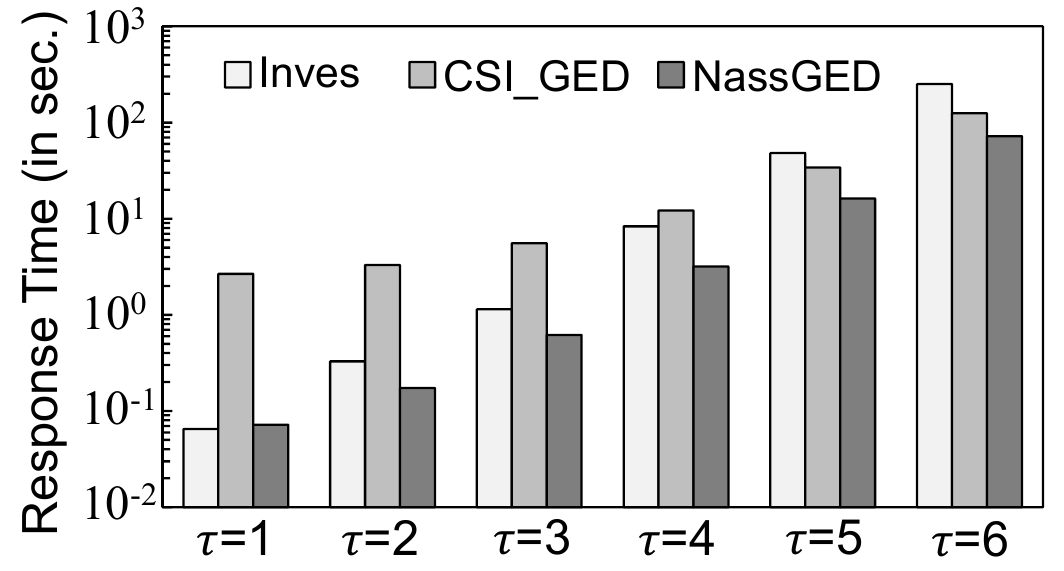}}
  \subfigure[Query time (PubChem)]{\includegraphics[height=2.2cm]{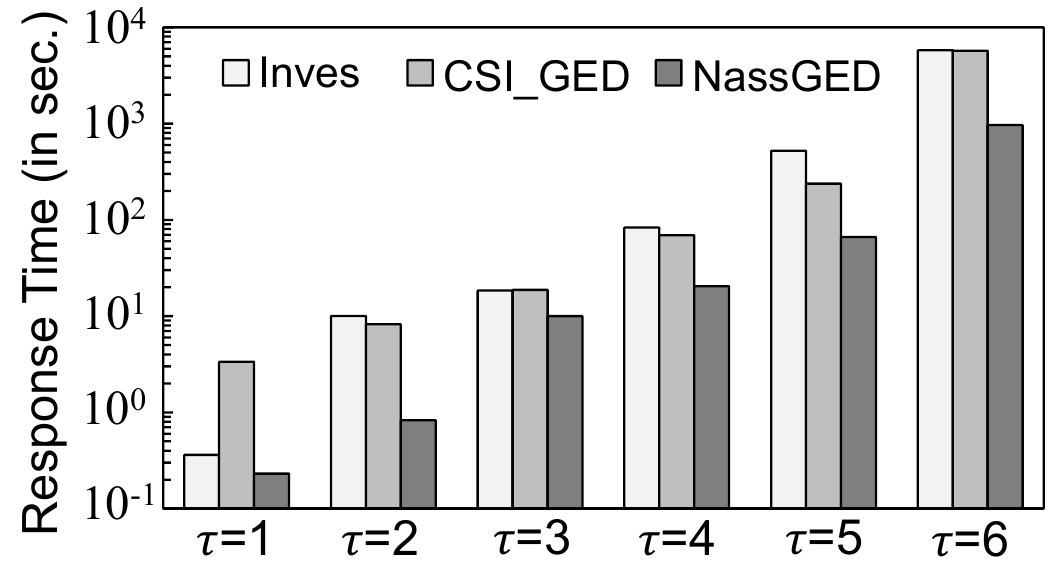}}
  \caption{Comparing {\sf NassGED} with existing GED computation algorithms}
  \label{fig:expr-ged}
\end{figure}

Figure~\ref{fig:expr-ged}(a) and (b) show the results on the AIDS and PubChem
datasets, respectively. {\sf NassGED} consistently outperformed {\sf Inves} and
{\sf CSI\_GED} for $\tau \geq 2$. On the AIDS dataset, {\sf Inves} slightly
outperformed {\sf NassGED} when $\tau = 1$, but the difference was negligible.
On lower thresholds, {\sf NassGED} and {\sf Inves} performed much better than
{\sf CSI\_GED}. For $\tau \geq 3$, {\sf NassGED} was up to 2.5 times faster
than existing algorithms on the AIDS, and up to 6 times faster on PubChem
datasets.

\begin{figure}[htbp]
  \centering
  \subfigure[Query time (AIDS)]{\includegraphics[height=2.2cm]{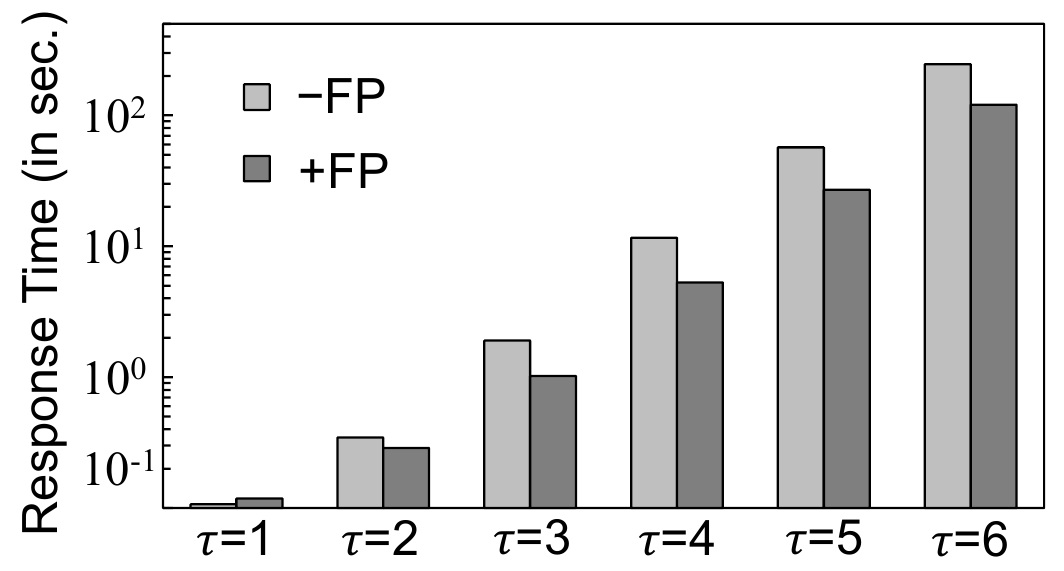}}\hfill
  \subfigure[Query time (PubChem)]{\includegraphics[height=2.2cm]{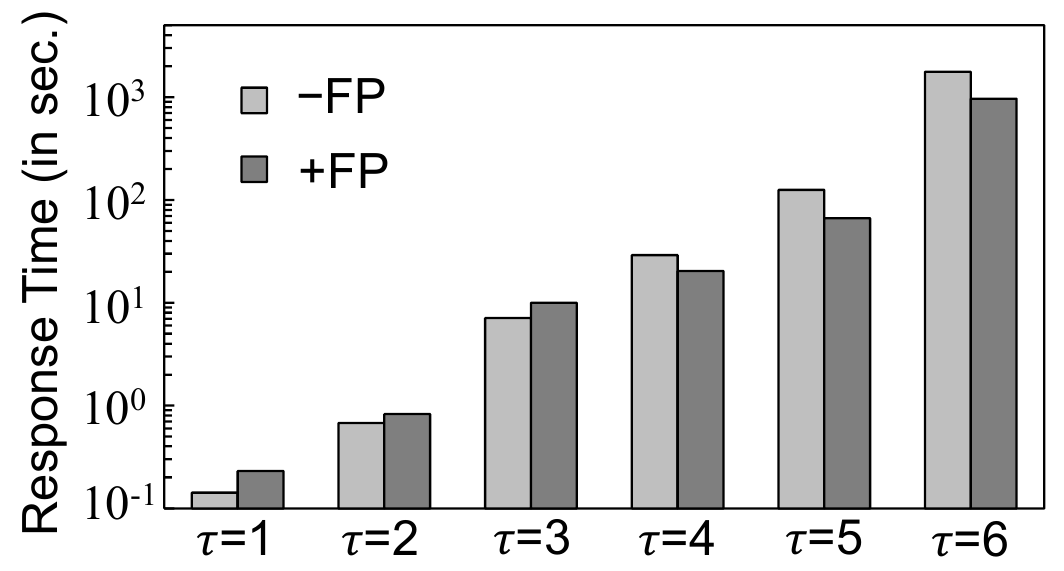}}
  \subfigure[Mappings (AIDS)]{\includegraphics[height=2.2cm]{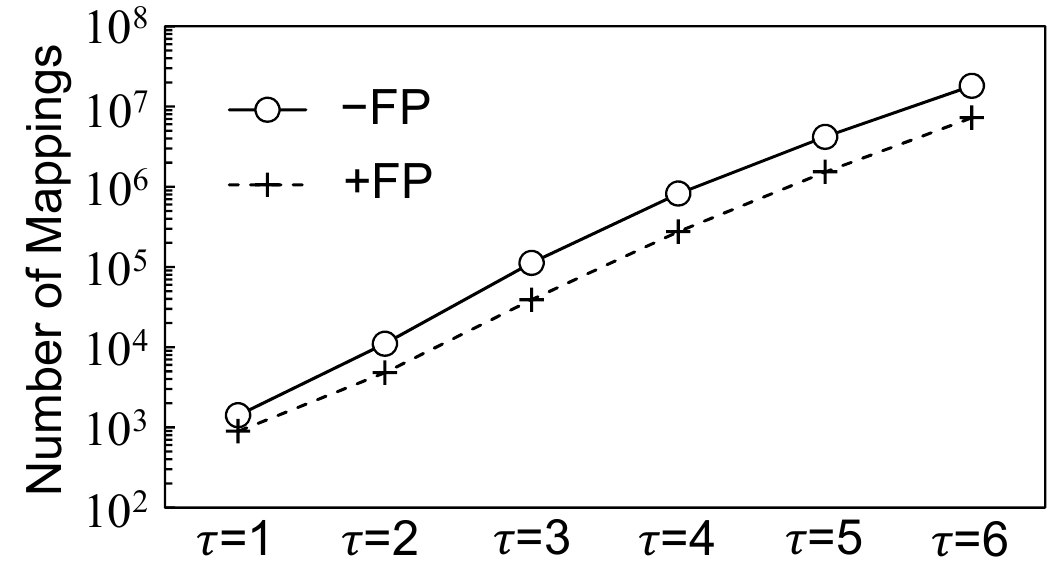}}\hfill
  \subfigure[Mappings (PubChem)]{\includegraphics[height=2.2cm]{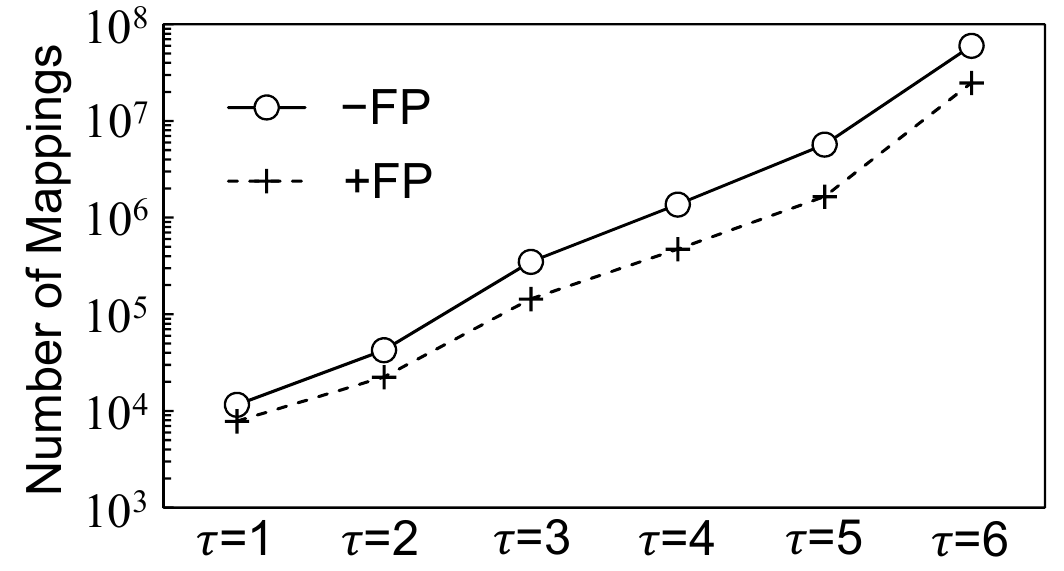}}
  \caption{Evaluating the filtering pipeline of {\sf NassGED}}
  \label{fig:expr-fp}
\end{figure}

The improvement of {\sf NassGED} can be explained by the filtering pipeline
(Section~\ref{sec:filtering}) and efficient implementation
(Section~\ref{sec:gedimpl}). Figure~\ref{fig:expr-fp}(a) and (b) show the
effect of the filtering pipeline on AIDS and PubChem. In the figures, {\sf +FP}
and {\sf -FP} denote {\sf NassGED} with and without the filtering pipeline,
respectively. As shown in the figure, {\sf +FP} improved GED computation by up
to 2.2 times. {\sf +FP} was slightly slower than {\sf -FP} for low thresholds
(e.g., $\tau = 1$ for AIDS and $\tau \leq 3$ for PubChem), because of the
overhead of lower bound computation. Although {\sf +FP} requires more
computation for lower bounds, we observed that the overhead of lower
bound computation did not affect the performance significantly.
The total number of mappings pushed into the queue while GED computation was
shown in Figure~\ref{fig:expr-fp}. {\sf +FP} reduced the number of mappings by
up to 3.3 times. 

\subsection{Scalability Test}
\label{sec:scale}

In this subsection, we report the results of the scalability of {\sf Nass}.
For the experiments, we generated synthetic datasets using a graph generator
{\sf GraphGen}\footnote{\sf https://www.cse.ust.hk/graphgen}. The
generator measures the graph size in terms of the number of edges ($|E|$), and
the density of a graph as $\frac{2|E|}{|V|(|V|-1)}$. We set up the generator as
follows: the average size of graphs is 40; the numbers of distinct vertex and
edge labels are 5 and 2, respectively; and the density of each graph is 0.2. We
initially generated 4k, 8k, 12k, 16k, and 20k datasets. For each graph in a
dataset, we generated 4 more graphs by randomly applying 2, 4, 6, 8, or 10 edit
operations to the graph 4 times. For scalability test, we used $\tau_{max} = 7$
and $\tau_{index} = \tau_{max} + 1$. Figure~\ref{fig:expr-scale} shows the
results. For various thresholds, {\sf Nass} scaled well to large datasets as
shown in the figure.

\begin{figure}[htbp]
  \centering{\includegraphics[height=1.9cm]{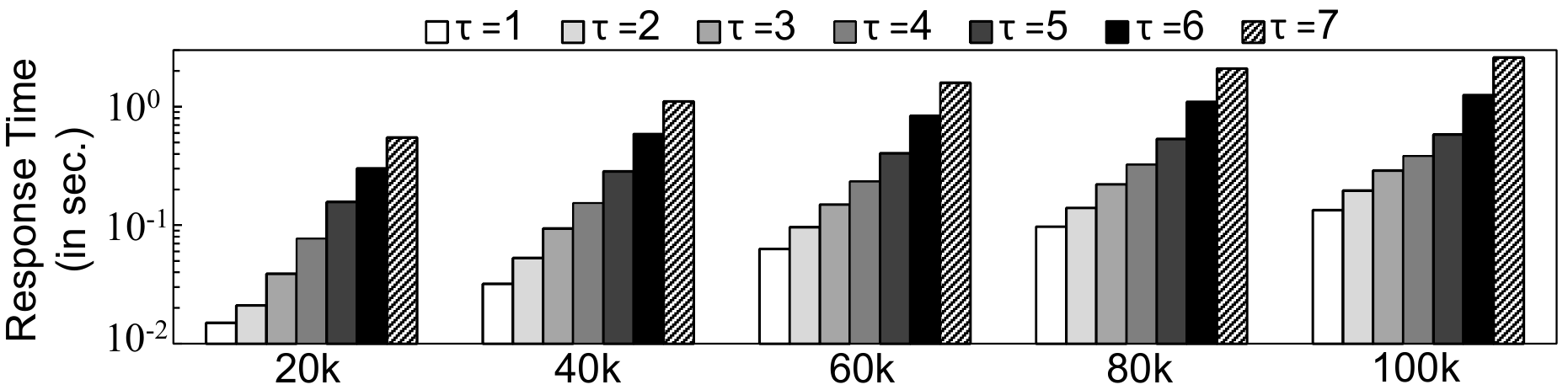}}
  \caption{Evaluating the scalability of {\sf Nass}}
  \label{fig:expr-scale}
\end{figure}

\section{Conclusions}
\label{sec:conclusion}

In this paper, we proposed a completely different approach to graph similarity
search. We generate candidate graphs via GED verification and verify each
candidate via various filtering techniques. The proposed search framework {\sf
  Nass} substantially reduces the number of candidates by dynamically
regenerating candidates while verifying candidates. To efficiently verify each
candidate, our GED computation algorithm utilizes various filtering techniques
to significantly prune the search space of the prefix tree.
We conducted extensive experiments on both real and synthetic
datasets, and the results showed that {\sf Nass} outperformed the state-of-the
art algorithms by an order of magnitude.

\balance

\bibliographystyle{ACM-Reference-Format}
\bibliography{ref}

\end{document}